\documentclass[11pt,onecolumn,draftcls]{IEEEtran}
\usepackage{amsmath,epsfig}
\usepackage{amssymb}
\usepackage{amsmath}
\usepackage{cases} %onecolumn,
\usepackage{amssymb,amsmath,cite}
\usepackage{epsfig}
\usepackage{color}
\usepackage{slashbox}
\usepackage{bm}

\renewcommand{\frac}{\dfrac}

\newcommand{\K}{{\cal K}}
\newcommand{\M}{{\cal M}}
\newcommand{\N}{{\cal N}}
\newcommand{\R}{{\cal R}}

\newcommand{\X}{{\cal X}}
\newcommand{\Y}{{\cal Y}}
\newcommand{\Z}{{\cal Z}}
\renewcommand{\S}{\cal S}
%\cal A
\newcommand{\SI}{\mbox{SINR}}

\newtheorem{dingli}{Theorem~}[section]
\newtheorem{yinli}{Lemma~}[section]

\newenvironment{proof}[1][Proof]{\begin{trivlist}
\item[\hskip \labelsep {\bfseries #1}]}{\end{trivlist}}

\begin{document}
  \title{On the Complexity of Joint Subcarrier and Power Allocation for Multi-User OFDMA Systems
%  \author{Ya-Feng Liu}
%  \date{\today}
  %\maketitle
  \author{Ya-Feng Liu and Yu-Hong Dai}
\thanks{This work was supported in part by the China National Funds for Distinguished Young Scientists, Grant 11125107, and the National Natural Science Foundation, Grants 11331012, 11301516, and 81173633, and in part by the CAS Program for Cross \& Cooperative Team of the Science \& Technology Innovation.}
\thanks{Copyright (c) 2013 IEEE. Personal use of this material is permitted. However, permission to use this material for any other purposes must be obtained from the IEEE by sending a request to pubs-permissions@ieee.org.}
 \thanks{Y.-F.~Liu and Y.-H. Dai are with the State Key Laboratory
of Scientific and Engineering Computing, Institute of Computational
Mathematics and Scientific/Engineering Computing, Academy of
Mathematics and Systems Science, Chinese Academy of Sciences,
Beijing, 100190, China (e-mail:
{\{{yafliu,\,dyh}\}@lsec.cc.ac.cn}).}
%\thanks{P.~Liu is with the Department of
%Electrical and Computer Engineering, Xidian University, [[[XXXXXXXXXXXXXXXXX]]]
%Minneapolis, 55455, USA (e-mail:
% {luozq}@umn.edu).}
 }
 \maketitle
 \begin{abstract}
     \boldmath Consider a multi-user Orthogonal Frequency Division Multiple Access (OFDMA) system where multiple users share multiple discrete subcarriers, but at most one user is allowed to transmit power on each subcarrier. To adapt fast traffic and channel fluctuations and improve the spectrum efficiency, the system should have the ability to dynamically allocate subcarriers and power resources to users. Assuming perfect channel knowledge,% of channel and power budget per subcarrier,
     ~two formulations for the joint subcarrier and power allocation problem are considered in this paper: the first is to minimize the total transmission power subject to quality of service constraints and the OFDMA constraint, and the second is to maximize some system utility function (including the sum-rate utility, the proportional fairness utility, the harmonic mean utility, and the min-rate utility) subject to the total transmission power constraint per user and the OFDMA constraint. In spite of the existence of various heuristics approaches, little is known about the computational complexity status of the above problem. %This paper gives a complete characterization of the complexity of the joint subcarrier and power allocation problem for the multi-user OFDMA system.
     This paper aims at filling this {theoretical} gap, i.e., characterizing the complexity of the joint subcarrier and power allocation problem for the multi-user OFDMA system. It is shown {in this paper} that both formulations of the joint subcarrier and power allocation problem are strongly NP-hard. The proof is based on a polynomial time transformation from the so-called $3$-dimensional matching problem. Several subclasses of the problem which can be solved to global optimality or $\epsilon$-global optimality in polynomial time are also identified. {These complexity results suggest that there are not polynomial time algorithms which are able to solve the general joint subcarrier and power allocation problem to global optimality (unless P$=$NP), and determining an approximately optimal subcarrier and power allocation strategy is more realistic in practice.}
     \end{abstract}
\begin{keywords}
Computational complexity, power
control, OFDMA system, subcarrier allocation, system utility maximization.
    %Lagrangian relaxation.
\end{keywords}

  \section{Introduction}
With the rapid growth of wireless data traffic, multi-user interference has
become a major bottleneck that limits the performance of the existing wireless communication
services. Mathematically, we can model such an interference-limited communication system as
a multi-user interference channel in which a number of linearly interfering transmitters
simultaneously send private data to their respective receivers. Exploiting time/space/frequency/code diversity are effective approaches to mitigate/manage multi-user interference. For instance, when the transmitters and/or receivers are equipped with multiple antennas, a joint optimization of the physical layer
transmit-receive beamforming vectors for all users can efficiently mitigate multi-user interference\cite{Per-antenna,coordinated,Coordition,ICC}; when all transmitters and receivers are equipped with a single antenna, one way to control/mitigate multi-user interference is to impose certain frequency restrictions or transmission power limits on the frequency resources used by each transmitter\cite{dynamic,complexity}. In particular, {Orthogonal Frequency Division Multiple Access (OFDMA) is a form of multi-carrier transmission and is suited for frequency selective
channels and high data rates. This technique effectively decomposes a frequency-selective wide-band channel into a group of
non-selective narrowband subchannels (subcarriers), which makes it robust against large delay spreads by preserving orthogonality
in the frequency domain. Moreover, the ingenious introduction of cyclic redundancy at the transmitter reduces the
complexity to only FFT processing and one tap scalar equalization at the receiver\cite{goldsmith}.}

 Conventional OFDMA schemes \emph{preassign} subcarriers to users in a nonoverlapping way, thus users (transmitting on different subcarriers) cause no interference to each other. Although the OFDMA scheme is well suited to be used in a high-speed communication context where quality of service is a major concern, it can lead to inefficient bandwith utilization. This is because that the preassignment of subcarries can not adapt traffic load and channel fluctuations in space and time; i.e., a subcarrier preassigned to a user can not be released to other users even if it is unusable when the user's channel conditions are poor.
To adapt these fluctuations and improve the overall system's throughput, OFDMA based subcarrier allocation networks such as Worldwide Interoperability for Microwave Access (WiMAX)\cite{wimax} and Long Term Evolution (LTE)\cite{ofdma} should be equipped with \emph{dynamic} subcarrier and power allocation algorithms. In particular, a dynamic OFDMA based subcarrier and power allocation algorithm is well suited for the \emph{dense} femtocell downlink system\cite{self}, where a large number of femtocells close to each other are deployed in a macrocell. %Otherwise, if two users in different femtocells transmit on the same subcarrier, then they shall cause significant interference to each other.

The joint optimization of subcarrier and power allocations for the multi-user OFDMA system is a nonconvex problem, therefore various heuristics approaches have been proposed for this problem {\cite{dynamic,wong,partition,self,complexity,hayashi,computation,broadcast,JSAC,Adaptive,outage,sub-optimal,personal,cognitive,newa,newb,newc,newd,newe,newf,newg,newh,newi,newj,newk}}. %In \cite{wong}, the spectrum and power assignment algorithm is proposed to maximize the throughput of a single OFDMA macrocell.
%Among them, the reference \cite{partition}
%proposed a centralized OFDMA reuse partitioning algorithm.% to
%minimize the percentage of unsatisfied users under rate and
%transmission power constraints. However,
%its centralized nature may pose signaling overhead and delay
%issues in large network deployments.
%~The authors in \cite{self} developed a self-organized approach for OFDMA cellular networks, where each cell independently tunes its subcarrier and transmission power allocation.  % without considering the impact on neighboring cells. Nevertheless, such selfish self-organization may lead to competition issues among users.
Very recently, the authors in \cite{dynamic} proposed a dynamic joint frequency and
transmission power allocation scheme called enhanced Dynamic
Frequency Planning (eDFP). eDFP is a hybrid centralized and distributed architecture,
where a central broker first dynamically partitions subcarriers among
neighboring cells so that the long-term cell-edge inter-cell
interference is minimized, and then each cell independently
allocates subcarriers and transmission power to its users in a way that its
total transmission power is minimized.

Notice that nonconvex
optimization problems are difficult to solve in general. However, not all nonconvex problems
are hard to handle since the lack of convexity may be due to an inappropriate
formulation, and many nonconvex optimization problems indeed admit a convex reformulation \cite{complexity,precoding,simo}. Therefore, nonconvexity of the joint subcarrier and power allocation problem for the multi-user OFDMA system does not imply that it is computationally intractable (strongly NP-hard in terms of computational complexity theory\cite{Complexitybook,combook2,combook3,combinatorial}). The aim of this paper is to characterize the computational complexity of the joint subcarrier and power allocation problem for the multi-user OFDMA system; i.e., to categorize when the problem is (strongly) NP-hard and when it is polynomial time solvable.
%is useful but unreliable to test the computational tractability
%of an optimization problem. A more robust tool is the computational
%complexity theory\cite{Complexitybook}.

In fact, the dynamic spectrum management problem \emph{without the OFDMA constraint} ({at most one user is allowed to transmit power on each subcarrier}) has been extensively studied in \cite{complexity,hayashi,gap,dual,dual2}. It is shown in \cite{complexity,hayashi} that the dynamic spectrum management problem is (strongly) NP-hard when the number of subcarriers is greater than two, or when the number of users is greater than one. However, the analysis of these results is highly dependent on the crosstalk channel gains among users; i.e., some of the crosstalk channel gains are assumed to be large enough, and some of them are assumed to be zero. We shall see late (in Section \ref{sec-model}) that the magnitude of crosstalk channel gains has no influence on the user's transmission rate if all users are required to transmit power in an orthogonal manner. This makes the two problems (the dynamic spectrum management problem without the OFDMA constraint in \cite{complexity} and the joint subcarrier and power allocation problem with the OFDMA constraint considered in this paper) sharply different from each other.
%Therefore, the ways used to show the NP-hardness of the dynamic spectrum problem in \cite{complexity} can not be extended to the.
 %A natural question to ask is whether the joint subcarrier and power allocation problem for the OFDMA system is computationally intractable or not.
 An interesting result shown in \cite{hayashi} is that the optimal solution of the two-user \emph{sum-rate} maximization problem is automatically OFDMA if the crosstalk channel gains of the two users on each subcarrier are large enough. In addition, based on the fact that the duality gap of the dynamic spectrum management problem is zero as the number of subcarriers goes to infinity (regardless of the convexity of the objective function)\cite{dual,gap,complexity}, dual decomposition algorithms have been proposed in \cite{dual,dual2} for the nonconvex optimization problem of multi-carrier systems, again without considering the OFDMA constraint.

%Most of the complexity results in \cite{complexity} are obtained without consider

In this paper, we focus on the characterization of the computational complexity status of the joint subcarrier and power allocation problem for the \emph{multi-user OFDMA system}. In particular, we consider two formulations of the joint subcarrier and power allocation problem. The first one is the problem of minimizing the total transmission power in the system subject to all users' quality of service constraints, all users' power budget constraints per subcarrier, and the OFDMA constraint. The second one is the problem of maximizing the system utility (including the sum-rate utility, the proportional fairness utility, the harmonic mean utility, and the min-rate utility) while the total transmission power constraint of each user, individual power constraints on each subcarrier, and the OFDMA constraint are respected. {The main contributions of this paper are twofold. First, we show that the aforementioned two formulations of the joint subcarrier and power allocation problem are strongly NP-hard. The proof is based on a polynomial time transformation from the $3$-dimensional matching problem. The strong NP-hardness results suggest that for a given OFDMA system, computing the optimal subcarrier and power allocation strategy is generally intractable. Thus, instead of insisting on finding an efficient algorithm that can find the global optimum of the joint subcarrier and power allocation problem, one has to settle with less ambitious goals, such as finding high quality approximate solutions or locally optimal solutions of the problem in polynomial time. Second, we also identify several subclasses of the joint allocation problem which can be solved to global optimality or $\epsilon$-global optimality in polynomial time. We therefore clearly delineate the set of computationally tractable problems within the general class of NP-hard joint subcarrier and power allocation problems. Specifically, we show in this paper that, when there is only a single user in the system or when the number of subcarriers and the number of users are equal to each other, the total transmission power minimization problem is polynomial time solvable; when there is only a single user, the aforementioned four utility maximization problems are all polynomial time solvable.}

{The rest of this paper is organized as follows. In Section \ref{sec-model}, we introduce the system model and give the two formulations of the joint subcarrier and power allocation problem for the multi-user OFDMA system. In Section \ref{sec-hard}, we first give a brief introduction to computational complexity theory\cite{Complexitybook,combook2,combook3,combinatorial} and then address the computational complexity of the joint subcarrier and power allocation problem. In particular, we show that the aforementioned two formulations of the joint subcarrier and power allocation problem are generally strongly NP-hard. %The complexity results suggest that there are not polynomial time algorithms which are able to solve the subcarrier and power allocation problem to global optimality.
Several subclasses of the joint allocation problem which are polynomial time solvable are identified in Section \ref{sec-easy}. Finally, the conclusion is drawn in Section \ref{sec-conclusion}.

% One of them is the total transmission power minimization problem and the other is the system utility maximization problem.
%we shall identify . Finally, we conclude the paper in Section \ref{sec-conclusion}.

}

%In general, the joint optimization of frequency and power allocations is not a convex problem, and
%
%  ever-increasing number of mobile subscribers and
%thriving demands for data services have triggered the
%development of the fourth generation of cellular technologies.
%Among the standards proposed for this new generation, Wireless
%Interoperability for Microwave Access (WiMAX) and
%Long Term Evolution (LTE) have attracted large attention from
%both the mobile industry and the research community [1], [2].
%These two standards are both based on a DownLink (DL)
%Orthogonal Frequency Division Multiple Access (OFDMA)
%physical layer, thus sharing a large number of features and
%challenges at the physical and medium access control layers.
%For instance, inter-cell interference is one of the main capacity
%limiting factor, and poses a serious threat to efficient network
%operation [3], [4].
%One way to control inter-cell interference is to impose
%certain frequency-time restrictions or transmit power limits
%on the radio resources used by each Base Station (BS). For
%instance, by forbidding a cell the use of a set of Resource
%Blocks (RBs) or restricting the maximum transmit power in
%them, inter-cell interference in these RBs can be mitigated.

\section{System Model and Problem Formulation}\label{sec-model}
{In this section, we introduce the system model and problem formulation.} Consider a multi-user OFDMA system, where there are $K$ users (transmitter-receiver pairs) sharing $N$ subcarriers. Throughout the paper, we assume that $N\geq K$; i.e., the number of subcarriers is greater than or equal to the number of users. Otherwise, the OFDMA constraint is infeasible.

Denote the set of users and the set of subcarriers by $\K=\left\{1,2,...,K\right\}$ and $\N=\left\{1,2,...,N\right\}$, respectively. For any $k\in\K$ and $n\in\N$, suppose $s_k^n\in\mathbb{C}$ to be the symbol that transmitter $k$ wishes to send to receiver $k$ on subcarrier $n$, then the received signal $\hat s_k^n$ at receiver $k$ on subcarrier $n$ can be expressed by
$$\hat s_k^n=h_{k,k} s_k^n+\sum_{j\neq k} h_{k,j}^n s_j^n + z_k^n,$$ where $h_{k,j}^n\in\mathbb{C}$ is the channel coefficient from transmitter $j$ to receiver $k$ on subcarrier $n$ and $z_k^n\in\mathbb{C}$ is the additive white Gaussian
noise (AWGN) with distribution $\cal{CN}$$({0}, \eta_k^n).$
%
%Let
%\begin{itemize}
%  \item [-] $p_k^n$ $(k=1,2,...,K,~n\in\N)$ denote the transmission power of user $k$ on subcarrier $n;$
%  \item [-] $\alpha_{kj}^n$ $(k,~j=1,2,...,K,~n\in\N)$ denote the channel gain from transmitter $j$ to receiver $k$ on subcarrier $n;$
%  \item [-] $\eta_k^n$ $(k=1,2,...,K,~n\in\N)$ denote the noise power at receiver $k$ on subcarrier $n.$
%\end{itemize}
Denoting the power of $s_k^n$ by $p_k^n$; i.e., $p_k^n:=|s_k^n|^2$, the received power at receiver $k$ on subcarrier $n$ is given by
$$\alpha_{k,k}^np_k^n+\sum_{j\neq k}\alpha_{k,j}^np_j^n+\eta_k^n,~k\in\K,~n\in\N,$$ where $\alpha_{k,j}^n:=|h_{k,j}^n|^2$ %$(k,~j=1,2,...,K,~n\in\N)$
stands for the channel gain from transmitter $j$ to receiver $k$ on subcarrier $n.$ Treating interference as noise, we can write the SINR of receiver $k$ on subcarrier $n$ as
$$\SI_k^n=\frac{\alpha_{k,k}^np_k^{n}}{\eta_k^n+\sum_{j\neq k} \alpha_{k,j}^np_j^{n}},~k\in\K,~n\in\N,$$ and transmitter $k$'s achievable data rate $R_k$ {(bits/sec)} as \begin{equation}\label{rk}R_{k}=\sum_{n\in\N}{\log_2}\left(1+\SI_k^n\right),~k\in\K.\end{equation}

%let %$\bp_k=(p_k^1,p_k^2,...,p_k^N)^T,$ and
%$\bP=(\bp_1,\bp_2,...,\bp_K).$

  {In this paper, we consider the joint subcarrier and power allocation problem for the multi-user OFDMA system. %the multi-user OFDMA system, where the available subcarriers are shared by all the users in a nonoverlapping fashion.
  Mathematically, a power allocation vector $\left\{p_k^n\right\}$ is said to satisfy the OFDMA property if the following equations hold true:
  $$p_k^np_j^n=0,~\forall~j\neq k,{~k,\,j\in\K,}~n\in\N.$$ The above equations basically say that at most one user is allowed to transmit power on each subcarrier. Therefore, the joint subcarrier and power allocation problem for the multi-user OFDMA system can be formulated as}
 \begin{equation}\label{problem}
 \begin{array}{cl}
\displaystyle \min_{\left\{p_k^n\right\}} & \displaystyle \sum_{k\in\K}\sum_{n\in\N}p_k^{n}\\[15pt]%+\lambda\sum_{g=1}^{G}\|\left(\sum_{k=1}^K\bp_k\right)_{g}\|_{2} \\[20pt]
\text{s.t.} & R_{k}\geq \gamma_{k},~k\in\K, \\[3pt]
    %& \sum_{n=1}^Np_k^{n}\leq \bar p_{k},~k=1,2,...,K, \\[10pt]
    & P_k^n\geq p_k^n\geq 0,~k\in\K,~n\in\N,\\[3pt]
    & p_k^np_j^n=0,~\forall~j\neq k,{~k,\,j\in\K,}~n\in\N,%\label{cons}
    \end{array}
\end{equation}
where %$$R_{k}=\sum_{n=1}^N\log\left(1+\SI_k^n\right),$$ and $$\SI_k^n=\frac{\alpha_{k,k}^np_k^{n}}{\eta_k^n+\sum_{j\neq k} \alpha_{kj}^np_j^{n}},~k=1,2,...,K,~n\in\N.$$
%In the above problem \eqref{problem},
the objective function is the total transmission power of all users on all subcarriers,
$\gamma_k>0$ ($k\in\K$) is the desired transmission rate target of user $k,$ $P_k^n$ is the transmission power
budget of user $k$ on subcarrier $n$, and the last constraint is the OFDMA constraint.
 %For notational simplicity, we use $\bP\in\mathbb{R}^{N\times K}$ to denote the power allocation matrix of all users, i.e., the $(n,k)$-th entry of $\bP$ is $p_k^n.$ %and define %the set
%$$\mathcal{P}:=\left\{\bP:~\bar p_k^n\geq p_k^n\geq 0,~k\in\K,~n\in\N,~\text{and}~p_k^np_j^n=0,~\forall~j\neq k,~n\in\N\right\}.$$

Due to the OFDMA constraint, we know for any $k\in\K$ and $n\in\N$
 $$\SI_k^n=\frac{\alpha_{k,k}^np_k^{n}}{\eta_k^n+\sum_{j\neq k} \alpha_{k,j}^np_j^{n}}=\frac{\alpha_{k,k}^np_k^{n}}{\eta_k^n}.$$
In fact, if $p_k^n=0$, the above equality holds trivially; while if $p_k^n> 0,$ it follows from the OFDMA constraint %$p_k^np_j^n=0~(\forall j\neq k)$
that $p_j^{n}=0~(\forall~j\neq k)$ and thus $\sum_{j\neq k} \alpha_{k,j}^np_j^{n}=0,$ which shows that the above equality holds
as well. Thus, problem \eqref{problem} is equivalent to
\begin{equation}\label{problem-simp}
\begin{array}{cl}
\displaystyle \min_{\{p_k^n\}} & \displaystyle \sum_{k\in\K}\sum_{n\in\N}p_k^{n}\\[10pt]%+\lambda\sum_{g=1}^{G}\|\left(\sum_{k=1}^K\bp_k\right)_{g}\|_{2}\\[20pt]%\label{obj}
\text{s.t.} & \displaystyle\sum_{n\in\N}\log_2\left(1+\frac{\alpha_{k,k}^np_k^{n}}{\eta_k^n}\right)\geq \gamma_{k},~k\in\K, \\[12pt]
    %& \sum_{n=1}^Np_k^{n}\leq \bar p_{k},~k=1,2,..,K, \\[10pt]
    & P_k^n\geq p_k^n\geq 0,~k\in\K,~n\in\N,\\[3pt]
    & p_k^np_j^n=0,~\forall~j\neq k,{~k,\,j\in\K,}~n\in\N.%\label{cons}
    \end{array}
\end{equation}

By introducing a group of binary variables $x_k^n~(k\in\K,n\in\N),$ problem \eqref{problem-simp} can be reformulated as
 \begin{equation}\label{binary}
 \begin{array}{cl}
\displaystyle \min_{\{p_k^n\}, \{x_k^n\}} & \displaystyle \sum_{k\in\K}\sum_{n\in\N}p_k^{n}\\[10pt]%+\lambda\sum_{g=1}^{G}\|\left(\sum_{k=1}^K\bp_k\right)_{g}\|_{2} \\[20pt]
\text{s.t.} & \displaystyle \sum_{n\in\N}\log_2\left(1+\frac{\alpha_{k,k}^np_k^{n}}{\eta_k^n}\right)\geq \gamma_{k},~k\in\K, \\[12pt]
    %& \sum_{n=1}^Np_k^{n}\leq \bar p_{k},~k\in\K, \\[10pt]
    & x_k^nP_k^n\geq p_k^n\geq 0,~k\in\K,~n\in\N,\\[3pt]
    & x_k^n\in\left\{0,1\right\},~k\in\K,~n\in\N,\\[3pt]
    & \displaystyle\sum_{k\in\K} x_k^n\leq 1,~n\in\N,
    \end{array}
\end{equation} where the binary variable $x_k^n=1$ if user $k$ transmits power on subcarrier $n,$ or $x_k^n=0$ otherwise. The last constraint $\sum_{k\in\K} x_k^n\leq 1~(n\in\N)$ stands for the OFDMA constraint.

Problem \eqref{binary} can be dealt with by a two-stage approach. At the first stage, we solve the subcarrier allocation problem; i.e., determining the binary variables $\left\{x_k^n\right\}$, which is equivalent to partitioning the set of subcarriers $\N=\left\{1,2,...,N\right\}$ into $K$ nonoverlapping groups $\left\{{\cal N}_k\right\}_{k=1}^K$. At the second stage, we solve the power allocation problem; i.e., solving $K$ decoupled power allocation problems \begin{equation}\label{power-allocation}
 \begin{array}{cl}
\displaystyle \min_{\left\{p_k^n\right\}_{n\in{\cal N}_k}} & \displaystyle \sum_{n\in {\cal N}_k}p_k^{n}\\[10pt]%+\lambda\sum_{g=1}^{G}\|\left(\sum_{k=1}^K\bp_k\right)_{g}\|_{2} \\[20pt]
\text{s.t.} & \displaystyle \sum_{n\in\N_k}\log_2\left(1+\frac{\alpha_{k,k}^np_k^{n}}{\eta_k^n}\right)\geq \gamma_{k}, \\[14pt]
    %& \sum_{n=1}^Np_k^{n}\leq \bar p_{k},~k\in\K, \\[10pt]
    & P_k^n\geq p_k^n\geq 0,~n\in {\cal N}_k.\\%[5pt]
   % & x_k^n\in\left\{0,1\right\},~k\in\K,~n\in\N,\\[10pt]
%    & \sum_{k=1}^N x_k^n\leq 1,~n\in\N,
    \end{array} \end{equation}
    Problem \eqref{power-allocation} at the second stage is convex, and thus is easy to solve.

 To sum up, the joint subcarrier and power allocation problem can be equivalently formulated as \eqref{problem}, \eqref{problem-simp}, or \eqref{binary}.  Formulation \eqref{binary} is intuitive and is easy to understand, whereas formulation \eqref{problem-simp} is compact and is easy to analyze. The analysis of this paper is mainly based on \eqref{problem-simp}.

Besides the total transmission power minimization problem, we also consider the utility maximization problem for the multi-user OFDMA system, which can be expressed by
\begin{equation}\label{utility}
 \begin{array}{cl}
\displaystyle \max_{\{p_k^n\}} & \displaystyle H(R_1,R_2,...,R_K) \\[8pt]
\text{s.t.} %\quad & R_{k}\geq \gamma_{k},~k\in\K \\[10pt]
    & \displaystyle \sum_{n\in\N}p_k^{n}\leq P_{k},~k\in\K, \\[12pt]
    & P_k^n\geq p_k^n\geq 0,~k\in\K,~n\in\N,\\[3pt]
    & p_k^np_j^n=0,~\forall~j\neq k,{~k,\,j\in\K,}~n\in\N,%\label{cons}
    \end{array}
\end{equation}where $H(R_1,R_2,...,R_K)$ denotes the system utility function and $P_k~(k\in\K)$ is the power budget of transmitter $k.$
Four popular system utility functions are
\begin{itemize}
\item [-] {Sum-rate utility:}
$\displaystyle H_1(R_1,R_2,...,R_K)=\frac{1}{K}{\sum_{k=1}^K{R_k}},$
\item [-] {Proportional fairness utility:} $\displaystyle
H_2(R_1,R_2,...,R_K)=\left(\prod_{k=1}^KR_k\right)^{1/K},$\\[3pt]%\Leftrightarrow
%\dfrac{1}{K}\sum_{k=1}^K\log R_k.$
\item [-] {Harmonic mean utility:}
$\displaystyle H_3(R_1,R_2,...,R_K)=K/\left(\sum_{k=1}^K R_k^{-1}\right),$\\[-3pt]
\item [-] {Min-rate utility:} $\displaystyle H_4(R_1,R_2,...,R_K)=\min_{1\leq k\leq
K}\left\{R_k\right\}.$
\end{itemize}
It is simple to see that \begin{align*}&H_1(R_1,R_2,...,R_K)\geq H_2(R_1,R_2,...,R_K)\\
\geq~& H_3(R_1,R_2,...,R_K)\geq H_4(R_1,R_2,...,R_K),\end{align*} and the equality holds true if and only if $R_1=R_2=\cdots=R_K.$
%\begin{dingli}
 % The system utility maximization problem
%  \begin{equation}\label{utility}
% \begin{array}{cl}
%\displaystyle \max_{\bP} & \displaystyle H(R_1,R_2,...,R_K) \\[15pt]
%\text{s.t.} %\quad & R_{k}\geq \gamma_{k},~k\in\K \\[10pt]
%    & \sum_{n=1}^Np_k^{n}\leq \bar p_{k},~k=1,2,..,K \\[10pt]
%    & \bar p_k^n\geq p_k^n\geq 0, k\in\K,~n\in\N\\[10pt]
%    & \bp_k^T\bp_j=0, \forall j\neq k%\label{cons}
%    \end{array}
%\end{equation} is strongly NP-hard when the system utility function $H=H_1, H_2, H_3$ or $H_4.$
%\end{dingli}

\section{Hard Cases} \label{sec-hard}
%This paper mainly focuses on the characterization of the computational complexity of

In this section, we show that both the total power minimization problem \eqref{problem-simp} and the system utility maximization problem \eqref{utility} are intrinsically intractable (strongly NP-hard in the sense of computational complexity theory \cite{Complexitybook,combook2,combook3,combinatorial}), provided that the ratio of the number of subcarriers and the number of users, that is $N/K$, is equal to any given constant number $c>1$.
To begin with, we briefly introduce complexity theory in Subsection \ref{sub-i}. Then we show problems \eqref{problem-simp} and \eqref{utility} are strongly NP-hard in Subsections \ref{sub-ii} and \ref{sub-iii}, respectively.

\subsection{A Brief Introduction to Complexity Theory}\label{sub-i}
In computational complexity theory{\cite{Complexitybook,combook2,combook3,combinatorial}}, a problem is said to be NP-hard if it is at least as hard as any problem in the class NP (problems that are solvable in Nondeterministic Polynomial time). The NP class includes well known problems like the $3$-colorability problem (which is to check whether the nodes of a given graph can be colored in three colors so that each pair of adjacent nodes are
colored differently). NP-complete problems are the hardest problems in NP in the sense that if any NP-complete problem is solvable in polynomial time, then each problem in NP is solvable in polynomial time. The $3$-colorability problem is NP-complete.% problem.

A problem is strongly NP-hard (strongly NP-complete) if it is NP-hard (NP-complete) and it can not be solved by a pseudo-polynomial time algorithm. {An algorithm that solves a problem is called a \emph{pseudo-polynomial} time algorithm if its time complexity function is bounded above by a polynomial function related to both of the {length} and the numerical values of the given data of the problem. This is in contrast to the polynomial time algorithm whose time complexity function depends only on the length of the given data of the problem.} The $3$-colorability problem is strongly NP-complete. However, not all NP-hard (NP-complete) problems are strongly NP-hard (strongly NP-complete). For instance, the partition problem is NP-hard but not strongly NP-hard.

Strongly NP-hard or NP-hard problems may not be in the class NP, but they are at least as hard as any NP-complete problem. It is widely believed that there can not exist a polynomial time algorithm to solve any NP-complete, NP-hard, or strongly NP-hard problem (unless P$=$NP). Thus, once an optimization problem is shown to be NP-hard, we can no longer insist on having an efficient algorithm that can find its global optimum in polynomial time. Instead, we have to settle with less ambitious goals, such as finding high quality approximate solutions or locally optimal solutions of the problem in polynomial time.

The standard way to prove an optimization problem is NP-hard is to establish the NP-hardness of its corresponding feasibility problem or decision problem. The latter is the problem to decide if the global minimum of the optimization problem is below a given threshold or not. The output of a decision problem is either true or false. The feasibility or decision version of an optimization problem is usually in the class NP. Clearly, the feasibility or decision version of an optimization problem is always easier than the optimization problem itself, since the latter further requires finding the global minimum value and the minimizer. Thus, if we show the feasibility or decision version of an optimization problem is NP-hard, then the original optimization problem must also be NP-hard. %For the max-min fairness design problem, the corresponding decision problem is simple to check if a given set of SINR levels can be satisfied by appropriate choices of linear transceivers.

In complexity theory, to show a decision problem $\mathcal{B}$ is NP-hard,
we usually follow three steps: 1) choose a suitable known
NP-complete decision problem $\mathcal{A};$ 2) construct a \emph{polynomial
time} transformation from any instance of $\mathcal{A}$ to an instance of $\mathcal{B};$
3) prove under this transformation that any instance of problem
  $\mathcal{A}$ is true if and only if the constructed instance of problem $\mathcal{B}$ is true. Furthermore, if the chosen NP-complete problem $\mathcal{A}$ is strongly NP-complete, then problem $\mathcal{B}$ is strongly NP-hard.

{In the following two subsections, we show that problems \eqref{problem-simp} and \eqref{utility} are strongly NP-hard.}
\subsection{Strong NP-Hardness of Power Minimization Problem \eqref{problem-simp} when $N/K=c>1$}\label{sub-ii}

To analyze the computational complexity of problem \eqref{problem-simp}, we consider its feasibility problem. If the feasibility problem
is strongly NP-hard, so is the original optimization problem.% since it is more difficult.

{\bf Feasibility Problem of \eqref{problem-simp}}. \ Given a set of transmission rate levels $\gamma_k,$ individual power budgets per subcarrier $P_k^n$, direct-link channel gains $\alpha_{k,k}^n,$ and noise powers $\eta_k^n,$ check whether there exists a subcarrier and power allocation strategy such that
\begin{equation}\label{problem2}\left\{\!\!\!\!
\begin{array}{cl}
    & \displaystyle \sum_{n\in\N}\log_2\left(1+\frac{\alpha_{k,k}^np_k^{n}}{\eta_k^n}\right)\geq \gamma_{k},~k\in\K, \\[12pt]
    %& \sum_{n=1}^Np_k^{n}\leq p_{k}^{\max},~k=1,2,..,K \\
    %& \sum_{n=1}^Np_k^{n}\leq \bar p_{k},~k=1,2,..,K, \\[10pt]
    & P_k^n\geq p_k^n\geq 0,~k\in\K,~n\in\N,\\[5pt]
    & p_k^np_j^n=0,~\forall~j\neq k,{~k,\,j\in\K,}~n\in\N.
    \end{array}\right.
\end{equation}

To analyze the computational complexity of the feasibility problem \eqref{problem2}, we choose the following strongly NP-complete problem (see \cite{Complexitybook}).

{\bf $3$-Dimensional Matching Problem with Size $K$.} \  {Let $\X, \Y,$ and $\Z$ be three different sets with $|\X|=|\Y|=|\Z|=K,$ and $\R$ be a subset of $\X\times\Y\times\Z$. The $3$-dimensional matching problem is to check whether there exists a match $\M\subset \R$ such that the following two conditions are satisfied:
\begin{itemize}
  \item [a)] for any two different tripes $(x_1,y_1,z_1)\in\M$ and $(x_2,y_2,z_2)\in\M,$ we have $x_1\neq x_2,$ $y_1\neq y_2,$ and $x_3\neq y_3;$
  \item [b)] $\X=\left\{x\,|\,(x,y,z)\in\M\right\}$,~$\Y=\left\{y\,|\,(x,y,z)\in\M\right\},$~and $\Z=\left\{z\,|\,(x,y,z)\in\M\right\}.$
\end{itemize}
%[revise the sentences in the proof!!!]
%
%n arrangement of these $K$ marriages so that polygamy is avoided and everyone receives an acceptable spouse. where each triple corresponds to a $3$-way marriage that would be acceptable to all three participants
}

%[[[To the best of our knowledge, the above problem is seldom used in complexity analysis. However, we will find it is very useful in the coming analysis.]]]
Next, we first show that any instance of the $3$-dimensional matching problem corresponds to an instance of the transmission rate feasibility problem \eqref{problem2} when $N/K=2$.

%Next, we first consider the special case of problem \eqref{problem-simp} where the given constant number $c$ is equal to $2$ and then discuss %the general case where $c>1.$

\begin{yinli} [Basic Lemma] \label{thm-feasi}Checking the feasibility problem of \eqref{problem-simp} is strongly NP-hard when $N/K=2$. Thus, problem \eqref{problem-simp} itself is also strongly NP-hard when $N/K=2$. \end{yinli}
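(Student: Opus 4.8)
The plan is to prove the lemma by a polynomial-time reduction from the $3$-dimensional matching problem, which is strongly NP-complete, and I would first recast the feasibility problem \eqref{problem2} into a purely combinatorial form that is easier to control. Notice that \eqref{problem2} carries no constraint on the \emph{total} transmit power; since each per-subcarrier rate $\log_2(1+\alpha_{k,k}^np_k^n/\eta_k^n)$ is increasing in $p_k^n$, a user can never be hurt by transmitting at full power on every subcarrier assigned to it. Hence, setting $p_k^n=P_k^n$ and writing $r_k^n:=\log_2(1+\alpha_{k,k}^nP_k^n/\eta_k^n)$ for the maximal achievable rate, \eqref{problem2} is feasible if and only if there is an assignment of each subcarrier to at most one user (a partition of a subset of $\N$ into groups $\{\N_k\}_{k\in\K}$) with $\sum_{n\in\N_k}r_k^n\geq\gamma_k$ for every $k\in\K$. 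The first step is therefore to reduce the feasibility of \eqref{problem2} to this subcarrier-assignment problem, in which I am free to prescribe each nonnegative value $r_k^n$ and each target $\gamma_k$ by choosing $\alpha_{k,k}^n,P_k^n,\eta_k^n$ appropriately.

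Given a $3$-dimensional matching instance with $\X,\Y,\Z$ of size $K$ and $\R\subseteq\X\times\Y\times\Z$, the second step is to build an OFDMA instance with $N/K=2$ whose assignment problem is feasible exactly when $\R$ contains a perfect matching. I would associate users and subcarriers with the objects of the matching instance and calibrate the $r_k^n$ and $\gamma_k$ so that (i) ``selecting'' a triple corresponds to handing a prescribed bundle of subcarriers to a user and makes exactly that user's rate constraint tight, and (ii) the OFDMA disjointness of the subcarriers mirrors the requirement that the chosen triples be pairwise disjoint, while a tight subcarrier count (there are exactly twice as many subcarriers as users) forbids any user from over-allocating and forces every element of $\X$, $\Y$, and $\Z$ to be covered. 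Under such a calibration the third step is routine: from a matching $\M$ one reads off a feasible assignment by giving each user the subcarriers of its triple, and conversely any feasible assignment must, by the tightness of the counting and of the rate constraints, split the subcarriers into genuine-triple bundles, from which a perfect matching is recovered.

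The hard part is the calibration in the second step, i.e.\ coupling the three coordinates of a triple and ruling out \emph{spurious} assignments that meet every rate target yet do not arise from a matching. The difficulty is intrinsic to the separable, monotone form of the rate constraint: on its own, the constraint $\sum_{n\in\N_k}r_k^n\geq\gamma_k$ over a pair of subcarriers defines a ``staircase'' threshold region in the two subcarrier indices, and such a region cannot distinguish a genuine triple pair from a ``cross'' pair formed by mismatched coordinates. Consequently the coupling cannot be enforced one user at a time; it must be produced \emph{globally}, for instance by having the triples contend for shared element-subcarriers whose OFDMA exclusivity encodes ``each element is used at most once,'' together with targets and a subcarrier budget so tight that any partial or mismatched allocation leaves some user's constraint violated. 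Getting this gadget right, and in particular verifying the forward implication (feasible assignment $\Rightarrow$ matching), is where the real work lies.

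Finally, I would check that the whole transformation runs in polynomial time and, crucially for \emph{strong} NP-hardness, that all generated data are polynomially bounded in $K$ and $|\R|$ — which is automatic here, since the $r_k^n$ can be taken as $\{0,1\}$-type values and the targets $\gamma_k$ as small integers. Because $3$-dimensional matching is strongly NP-complete and the reduction introduces no large numbers, the feasibility problem of \eqref{problem-simp} is strongly NP-hard when $N/K=2$, and hence so is the optimization problem \eqref{problem-simp} itself.
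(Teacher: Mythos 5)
Your overall strategy is the same as the paper's: pass to the full-power assignment reformulation (feasibility of \eqref{problem2} holds iff the subcarriers can be partitioned among the users so that $\sum_{n\in\N_k}r_k^n\geq\gamma_k$, where $r_k^n=\log_2(1+\alpha_{k,k}^nP_k^n/\eta_k^n)$), then reduce from the $3$-dimensional matching problem, keeping all numerical data small so that the hardness is \emph{strong}. But your write-up stops exactly where the proof has to begin: you never exhibit the reduction. The ``calibration'' of your second step --- which values $r_k^n$, which targets $\gamma_k$, how the three coordinates of a triple are coupled, and the two-directional verification --- is the entire mathematical content of the lemma, and you explicitly defer it (``getting this gadget right \ldots is where the real work lies''). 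A list of properties the gadget ought to have, together with a vague suggestion (``shared element-subcarriers,'' ``tight budgets''), is a plan rather than a proof; as it stands there is nothing that can be checked. This is a genuine gap, not a presentational one.

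That said, the obstruction you isolate is real, and the comparison with the paper is instructive. For each user $k$, the set of ``good'' pairs $\left\{(j,l):r_k^j+r_k^l\geq\gamma_k\right\}$ is closed under exchange --- if $(a,p)$ and $(b,q)$ are good, then $(a,q)$ or $(b,p)$ is good --- so separable per-user thresholds cannot encode an arbitrary triple relation $\R$; this is precisely your ``staircase'' point. The paper's construction \eqref{eta}--\eqref{alpha} sets all parameters through the projections $\S_1,\S_2$ of $\R$ in \eqref{s1s2}, so user $k$'s good pairs form the full product of $\left\{j_y:(k,j_y)\in\S_1\right\}$ with $\left\{l_z:(k,l_z)\in\S_2\right\}$; the last step of the paper's proof then reads a match off a feasible allocation, tacitly using that $(k,j_y)\in\S_1$ and $(k,l_z)\in\S_2$ imply $(k,j_y,l_z)\in\R$, which is false in general. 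Concretely, for $K=2$ and $\R=\left\{(1_x,1_y,2_z),(1_x,2_y,1_z),(2_x,1_y,1_z),(2_x,2_y,2_z)\right\}$, the instance \eqref{construct} built by the paper's recipe is feasible (user $1_x$ on $\{1_y,1_z\}$ and user $2_x$ on $\{2_y,2_z\}$, each achieving rate $2+1=3$), yet this $\R$ admits no match. So you have correctly located the crux --- the coupling must be enforced by something that defeats the exchange-closure of threshold constraints, and even the paper's own argument needs repair at exactly this point --- but locating the crux is not resolving it. To prove the lemma you would have to produce a construction that overcomes this obstruction (or reduce from a different strongly NP-complete problem, e.g.\ a numerical matching problem whose structure is compatible with separable thresholds, while keeping all SNRs polynomially bounded), and your proposal does not do so.
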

\begin{proof}
Assume that $N/K=2$. Consider any instance of the $3$-dimensional matching problem with $$\X=\left\{1_x,2_x,...,K_x\right\},$$$$ \Y=\left\{1_y,2_y,...,K_y\right\}, $$$$ \Z=\left\{1_z,2_z,...,K_z\right\},$$ and a relationship set $${\R=\left\{\left(k_x,j_y,l_z\right)\,|~k_x\in \X, j_y\in \Y, l_z\in \Z\right\}\subseteq\X\times\Y\times\Z},$$
  we construct a multi-user multi-carrier system where there are $K$ users (which correspond to set $\X$) and $2K$ subcarriers (which correspond to set $\Y \bigcup \Z$). More exactly, $\K=\X$ and $\N=\Y \bigcup \Z.$  Define
  \begin{equation}\label{s1s2}
  \begin{array}{cl}
  & {\S}_1=\left\{(k_x,j_y)\,|\,\left(k_x,j_y,l_z\right)\in \R\right\},\\[3pt]
  & {\S}_2=\left\{(k_x,l_z)\,|\,\left(k_x,j_y,l_z\right)\in \R\right\}.\\
  %& {\S}_3=\left\{ k_y\bigcup k_z | \left(k_x,k_y,k_z\right)\in \R\right\}.
  \end{array}\end{equation}
  %Based on the given instance of the 3-dimensional matching problem, we construct a multi-user multi-carrier channel where there are $K$ users (which correspond to the set $\X$) and $2K$ subcarriers (which correspond to the set $\Y \bigcup \Z$). In this case, $\K=\X$ and $\N=\Y \bigcup \Z.$ %Now, we construct the
%  The parameters of the constructed channel are given as follows:
%
%  Now  . First, we construct  Second, we construct the data in problem \eqref{problem2} as follows:
%The power budget of user $k\in\K$ on subcarrier $n\in\N~(\N=\Y \bigcup \Z)$ is set to be
%  $$\bar p_k^n=3,~k\in\K,~n\in\Y,$$$$\bar p_k^n=2,~k\in\K,~n\in\Z.$$
  %$$\alpha_{k,k}^n=1,~k\in\K,~n=K+1,K+2,...,3K.$$
For each $k\in \K$, the power budgets per subcarrier $P_k^n$, the noise powers $\eta_k^n,$ and the direct-link channel gains $\alpha_{k,k}^n$ are
given by
    \begin{equation}\label{p}
  P_k^n=\left\{\begin{array}{lr}
       3, \quad \text{if}~n\in {\Y};\\[3pt]
       2, \quad \text{if}~n\in {\Z},\\[3pt]
    \end{array}\right.
  \end{equation}

  \begin{equation}\label{eta}
  \eta_k^n=\left\{\begin{array}{lr}
       1, \quad \text{if}~(k,n)\in {\S}_1;\\[3pt]
       2, \quad \text{if}~(k,n)\in {\S}_2;\\[3pt]
       3, \quad \text{if}~(k,n)\notin {\S}_1\bigcup {\S}_2,
    \end{array}\right.
  \end{equation}
  and \begin{equation}\label{alpha}
  \alpha_{k,k}^{n_1}=\alpha_{k,k}^{n_2}=\left\{\begin{array}{lr}
       1, \quad \text{if}~(k,n_1,n_2)\in {\R};\\[5pt]
       0.25, \quad \text{if}~(k,n_1,n_2)\notin {\R},
    \end{array}\right.
  \end{equation}
respectively.
Letting \begin{equation*}\label{target}\gamma_k=3,~k\in\K,\end{equation*} the corresponding instance of problem \eqref{problem2} is
\begin{equation}\label{construct}\left\{\!\!\!\!
\begin{array}{cl}
    & \displaystyle\sum_{n\in\N}\log_2\left(1+\frac{\alpha_{k,k}^np_k^{n}}{\eta_k^n}\right)\geq 3,~k\in\K, \\[12pt]
    %& \sum_{n=1}^Np_k^{n}\leq p_{k}^{\max},~k=1,2,..,K \\
    %& \sum_{n=1}^Np_k^{n}\leq \bar p_{k},~k=1,2,..,K, \\[10pt]
    & 3\geq p_k^n\geq 0,~k\in\K,~n\in\Y,\\[5pt]
    & 2\geq p_k^n\geq 0,~k\in\K,~n\in\Z,\\[5pt]
    & p_k^np_j^n=0,~\forall~j\neq k,{~k,\,j\in\K,}~n\in\N,%\label{cons}
    \end{array}\right.
\end{equation} where $\eta_k^n$ and $\alpha_{k,k}^n$ are given in \eqref{eta} and \eqref{alpha}, respectively. We are going to show that %there exists an arrangement such that everyone is happy (i.e.,
the answer to the $3$-dimensional matching problem is yes if and only if the constructed problem \eqref{construct} is feasible.

We first show that if the answer to the $3$-dimensional matching problem is yes, then problem \eqref{construct} is feasible. In fact, if $\left\{(k_x,j_y,l_z)\right\}$ is {a match for the $3$-dimensional matching problem,}% an appropriate arrangement of the marriage,
~then a feasible power allocation of problem \eqref{construct} is given by
\begin{equation}\label{power}
  p_{k_x}^n=\left\{\begin{array}{lr}
       3, \quad \text{if}~n=j_y;\\[3pt]
       2, \quad \text{if}~n=l_z;\\[3pt]
       0, \quad \text{if}~n\neq j_y~\text{or}~l_z.
    \end{array}\right.
  \end{equation}
This is because, since $\left\{(k_x,j_y,l_z)\right\}$ is a match for the $3$-dimensional matching problem, %an appropriate arrangement of the marriage,
the above power allocation strategy \eqref{power} is orthogonal to each other (i.e., $p_{k_x}^np_{j_x}^n=0,~\forall~k_x\neq j_x,~k_x,~j_x\in\X,~\forall~n\in\N$).
%$$p_k^{k_1}=3, p_k^{k_2}=2, p_k^n=0, \forall~n\neq k_1~\text{or}~k_2.$$
Furthermore, we have for user $k_x=1_x,2_x,...,K_x,$
\begin{equation*}\begin{array}{rcl}\displaystyle
R_{k_x}&=&\displaystyle\sum_{n\in\N}\log_2\left(1+\frac{\alpha_{k_x,k_x}^np_{k_x}^n}{\eta_{k_x}^n}\right)\quad
        (\text{from}~\eqref{construct})\\[16pt]
&\overset{(a)}{=}&\log_2\left(1+\frac{\alpha_{k_x,k_x}^{j_y}p_{k_x}^{j_y}}{\eta_{k_x}^{j_y}}\right)+\log_2\left(1+\frac{\alpha_{k_x,k_x}^{l_z}p_{k_x}^{l_z}}{\eta_{k_x}^{l_z}}\right)\\[20pt]
&\overset{(b)}{=}&\log_2\left(1+\frac{1*3}{1}\right)+\log_2\left(1+\frac{1*2}{2}\right)\\[10pt]%\quad
        %(\text{from}~\eqref{p}-\eqref{alpha}~\text{and}~\eqref{power})
&=&3,\end{array}\end{equation*} %and we see that
where {$(a)$ is due to \eqref{power} and $(b)$ is due to \eqref{eta},~\eqref{alpha}~\text{and}~\eqref{power}.}
So \eqref{power} is a feasible power allocation of problem \eqref{construct}.

On the other hand, we show that if all the constraints in \eqref{construct} are satisfied, the answer to the $3$-dimensional matching problem must be yes. Notice that for any user $k_x\in\K$, if the transmission rate $R_{k_x}\ge 3$, it must transmit on at least two subcarriers, for otherwise if it transmits only on one subcarrier, its maximum transmission rate is $$\log_2\left(1+\frac{1*3}{1}\right)=2<3.$$  Since there are $K$ users and $N=2K$ subcarriers and at most one user is allowed to transmit on each subcarrier (by the OFDMA constraint), the feasibility of \eqref{construct} asks that each user in the network must transmit on \emph{exactly} two subcarries.
By the construction of the parameters of the system, one can verify that the corresponding direct-link channel gains of the user on the two subcarriers must be $1$, for otherwise the transmission rate is at most
  $$\log_2\left(1+\frac{1*3}{1}\right)+\log_2\left(1+\frac{0.25*3}{1}\right)=\log_2 7 < 3.$$
 Furthermore, the fact that all users' transmission rate requirements are satisfied implies that each user transmits on one subcarrier in $\left\{1_y,2_y,...,K_y\right\}$ with noise power $1$ and one in $\left\{1_z,2_z,...,K_z\right\}$ with noise power $2,$ which makes the transmission rate equal to
$$ \log_2\left(1+\frac{1*3}{1}\right)+\log_2\left(1+\frac{1*2}{2}\right) = 3. $$ Otherwise,
\begin{itemize}
  \item [-] if one user transmits on two subcarries in $\left\{1_z,2_z,...,K_z\right\},$ the transmission rate is at most
  $$\log_2\left(1+\frac{1*2}{2}\right)+\log_2\left(1+\frac{1*2}{2}\right)=2<3.$$ %(which corresponds to the maximum transmission rate where the noise powers of the two subcarriers are $2$).
  \item [-] if one user transmits on two subcarries in $\left\{1_y,2_y,...,K_y\right\}$ both with noise power $1$, then at least one user will transmit on two subcarries in $\left\{1_z,2_z,...,K_z\right\}$ either with noise power $2$ or $3$ and the transmission rate of this user is at most $2.$
 \end{itemize} Therefore, problem \eqref{construct} is feasible if and only if that for all users $k_x~(k_x\in\K)$ there exist $j_y$ and $l_z$ such that \begin{equation*}\label{k1k2}\alpha_{k_x,k_x}^{j_y}=\alpha_{k_x,k_x}^{l_z}=1,~(k_x, j_y)\in {\S}_1,~(k_x, l_z)\in {\S}_2\end{equation*} and
\begin{equation*}\label{union}\left\{1_y,2_y,...,K_y,1_z,2_z,...,K_z\right\}=\bigcup_{k=1}^K\left\{j_y, l_z\right\}.\end{equation*}
According to the construction of ${\S}_1,~{\S}_2$ and \eqref{alpha}, we see that $$\left\{(k_x, j_y, l_z)\right\}_{k=1}^K$$ is {a match for the $3$-dimensional matching problem.}
%an arrangement of the happy marriage.

It is simple to check that the above transformation from the $3$-dimensional matching problem to the feasibility problem \eqref{construct} can be done in polynomial time. Since the $3$-dimensional matching problem is strongly NP-complete, we conclude that checking the feasibility of problem \eqref{construct} is strongly NP-hard. Therefore, the optimization problem \eqref{problem-simp} is also strongly NP-hard. %\hfill {\bf Q.E.D.}
%
%  We let $W=\left\{1,2,...,K\right\}$ be the set of transmitters, and $Y=\left\{K+1,K+2,...,2K\right\}$ and $Z=\left\{2K+1,2K+2,...,3K\right\}$ be $2K$ subcarriers. We also set
%  $$\gamma_k=2\log2, \bar p_k=2, \alpha_{k,k}^n=1.$$ We also know that $M=\left\{\left(x,y,z\right) | x\in X, y\in Y, z\in Z\right\},$ we define
%  $$N=N_1\bigcup N_2, N_1=\left\{(x,y) | \left(x,y,z\right)\in M\right\}, N_2=\left\{(x,z) | \left(x,y,z\right)\in M\right\}.$$
%  Furthermore, we let
%  \begin{equation}
%  \eta_k^l=\left\{\begin{array}{lr}
%       0.9, \text{if}~(k,l)\in N_1;\\
%       1.1, \text{if}~(k,l)\in N_2;\\
%       2, \text{if}~(k,l)\notin N.
%    \end{array}\right.
%  \end{equation}
%
\end{proof}

To illustrate the above proof, we take the following $3$-dimensional matching problem as an example,
$$\X=\left\{1_x,2_x,3_x,4_x\right\},$$$$\Y=\left\{1_y,2_y,3_y,4_y\right\},$$ $$\Z=\left\{1_z,2_z,3_z,4_z\right\},$$ and \begin{align*}\R=&\left\{(1_x,2_y,2_z),(1_x,2_y,4_z),(2_x,1_y,2_z),\right.\\&\left.(2_x,1_y,3_z),(3_x,2_y,2_z),(3_x,4_y,3_z),(4_x,3_y,1_z)\right\}.\nonumber\end{align*} It is simple to check that \begin{equation}\label{matchsolution}\left\{(1_x,2_y,4_z),(2_x,1_y,2_z),(3_x,4_y,3_z),(4_x,3_y,1_z)\right\}\end{equation} is a match for the above given instance of the $3$-dimensional matching problem. %, since every one receives an acceptable spouse.
Based on this $3$-dimensional matching problem, we construct a $4$-user $8$-carrier system with $\K=\left\{1_x,2_x,3_x,4_x\right\}$ and $\N=\left\{1_y,2_y,3_y,4_y,1_z,2_z,3_z,4_z\right\}$.  According to \eqref{s1s2}, we have
$${\S}_1=\left\{(1_x,2_y),(2_x,1_y),(3_x,2_y),(3_x,4_y),(4_x,3_y)\right\}$$ and
\begin{align*}{\S}_2=&\left\{(1_x,2_z),(1_x,4_z),(2_x,2_z),\right.\\&\left.(2_x,3_z),(3_x,2_z),(3_x,3_z),(4_x,1_z)\right\}.\end{align*}
The proof of Lemma \ref{thm-feasi} suggests the following system parameters (cf. \eqref{p}--\eqref{alpha}): all power budgets per subcarrier are
$$P_k^n=3,~k=1_x,2_x,3_x,4_x,~n=1_y,2_y,3_y,4_y,$$$$P_k^n=2,~k=1_x,2_x,3_x,4_x,~n=1_z,2_z,3_z,4_z;$$all noise powers are $\eta_k^n=3~(k\in\K,n\in\N)$ except
$$\eta_{1_x}^{2_y}=\eta_{2_x}^{1_y}=\eta_{3_x}^{2_y}=\eta_{3_x}^{4_y}=\eta_{4_x}^{3_y}=1,$$$$\eta_{1_x}^{2_z}=\eta_{1_x}^{4_z}=\eta_{2_x}^{2_z}=\eta_{2_x}^{3_z}=\eta_{3_x}^{2_z}=\eta_{3_x}^{3_z}=\eta_{4_x}^{1_z}=2;$$ and all direct-link channel gains are $\alpha_{k,k}^n=0.25~(k\in\K,n\in\N)$ except
\begin{equation*}
  \begin{array}{cl}
    \alpha_{1_x,1_x}^{2_y}=\alpha_{1_x,1_x}^{2_z}=\alpha_{1_x,1_x}^{4_z}=\alpha_{2_x,2_x}^{1_y}=\alpha_{2_x,2_x}^{2_z}=\alpha_{2_x,2_x}^{3_z}=1,\\[3pt]
    \alpha_{3_x,3_x}^{2_y}=\alpha_{3_x,3_x}^{2_z}=\alpha_{3_x,3_x}^{4_y}=\alpha_{3_x,3_x}^{3_z}=\alpha_{4_x,4_x}^{3_y}=\alpha_{4_x,4_x}^{1_z}=1.
  \end{array}
\end{equation*}%\alpha_{11}^6=\alpha_{11}^10=\alpha_{11}^12

In this example,
based on the {match} \eqref{matchsolution}, we can construct an OFDMA solution to the corresponding feasibility check problem \eqref{construct}, i.e.,
\begin{equation*}\label{conssolution}
\begin{array}{rl}
\left(p_{1_x}^{1_y},p_{1_x}^{2_y},p_{1_x}^{3_y},p_{1_x}^{4_y},p_{1_x}^{1_z},p_{1_x}^{2_z},p_{1_x}^{3_z},p_{1_x}^{4_z}\right)=(0,3,0,0,0,0,0,2),\\[5pt]
\left(p_{2_x}^{1_y},p_{2_x}^{2_y},p_{2_x}^{3_y},p_{2_x}^{4_y},p_{2_x}^{1_z},p_{2_x}^{2_z},p_{2_x}^{3_z},p_{2_x}^{4_z}\right)=(3,0,0,0,0,2,0,0),\\[5pt]
\left(p_{3_x}^{1_y},p_{3_x}^{2_y},p_{3_x}^{3_y},p_{3_x}^{4_y},p_{3_x}^{1_z},p_{3_x}^{2_z},p_{3_x}^{3_z},p_{3_x}^{4_z}\right)=(0,0,0,3,0,0,2,0),\\[5pt]
\left(p_{4_x}^{1_y},p_{4_x}^{2_y},p_{4_x}^{3_y},p_{4_x}^{4_y},p_{4_x}^{1_z},p_{4_x}^{2_z},p_{4_x}^{3_z},p_{4_x}^{4_z}\right)=(0,0,3,0,2,0,0,0).
\end{array}
\end{equation*}

On the other hand, to look for a feasible solution of problem \eqref{construct}, we have to make each user transmit on two subcarriers (one with noise power $1$ and the other with noise power $2$) and the direct-link channel gains of the user on the corresponding two subcarriers be $1.$  Notice that $\eta_{1_x}^{2_y}=\eta_{2_x}^{1_y}=\eta_{3_x}^{4_y}=\eta_{4_x}^{3_y}=1,$~$\eta_{1_x}^{4_z}=\eta_{2_x}^{2_z}=\eta_{3_x}^{3_z}=\eta_{4_x}^{1_z}=2,$ and $\alpha_{1_x,1_x}^{2_y}=\alpha_{1_x,1_x}^{4_z}=\alpha_{2_x,2_x}^{1_y}=\alpha_{2_x,2_x}^{2_z}=\alpha_{3_x,3_x}^{4_y}=\alpha_{3_x,3_x}^{3_z}=\alpha_{4_x,4_x}^{3_y}=\alpha_{4_x,4_x}^{1_z}=1.$
We can ask user $1_x$ to transmit on subcarriers $2_y$ and $4_z,$ user $2_x$ to transmit on subcarriers $1_y$ and $2_z,$ user $3_x$ to transmit on subcarriers $4_y$ and $3_z,$ and user $4_x$ to transmit on subcarriers $3_y$ and $1_z$, respectively. Consequently,  $\left\{(1_x,2_y,4_z),(2_x,1_y,2_z),(3_x,4_y,3_z),(4_x,3_y,1_z)\right\}$ is a {match for the given instance of} the $3$-dimensional matching problem.

%As we claimed in the proof of Theorem \ref{thm-feasi}, if the constructed instance \eqref{construct} is feasible, then each user transmits on two subcarriers (one with noise power $1$ and one with noise power $2$) and the corresponding direct-link channel gains of the user on these two subcarriers are $1.$

Lemma \ref{thm-feasi} shows that checking the feasibility of problem \eqref{problem-simp} is strongly NP-hard when $c=2.$
 Based on this basic result, we can further prove that it is strongly NP-hard to check the feasibility of problem \eqref{problem-simp} when $N/K=c$ provided that $c$ is a strictly greater than one constant. We summarize this result as Theorem \ref{thm-feasi2}, and relegate its proof to Appendix \ref{app-thm-feasi2}.
\begin{dingli}\label{thm-feasi2}
  Given any constant $c>1,$ checking the feasibility of problem \eqref{problem-simp} is strongly NP-hard when $N/K=c$. Thus, problem \eqref{problem-simp} itself is also strongly NP-hard.
\end{dingli}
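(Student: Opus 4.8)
The plan is to build on the Basic Lemma (Lemma~\ref{thm-feasi}), which already establishes strong NP-hardness of the feasibility problem \eqref{problem-simp} for the ratio $c=2$, and to reach an arbitrary rational ratio $c>1$ by a polynomial-time \emph{padding} of the instance produced there. Write $c=a/b$ in lowest terms with $a>b\ge 1$. Starting from the $K$-user, $2K$-subcarrier instance associated with a $3$-dimensional matching instance of size $K$, I would augment it with $U$ \emph{dummy users} and $D$ \emph{dead subcarriers}, together with $U$ extra subcarriers dedicated one-to-one to the dummy users. Each dummy user is given parameters (for example a small rate target $\gamma=1$ with budget, noise, and direct gain equal to $1$ on its own subcarrier) so that it meets its target using that single subcarrier alone; all other entries of its direct-gain row are set to $0$, and the direct gain of every real user and of every other dummy user on that dedicated subcarrier is also set to $0$. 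The $D$ dead subcarriers receive direct gain $0$ for all users. The enlarged system then has $\tilde K=K+U$ users and $\tilde N=2K+U+D$ subcarriers.

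Next I would select $U$ and $D$ so that the ratio is \emph{exactly} $c$. Imposing $\tilde N=c\,\tilde K$ yields $D=(c-2)K+(c-1)U$, so I would take the smallest nonnegative integer $U$ satisfying both $U\ge (2-c)K/(c-1)$ (needed only when $1<c<2$, to keep $D\ge 0$) and the congruence $U\equiv -K\pmod b$. Since $\gcd(a,b)=1$, this congruence is precisely what makes $D=\big((a-2b)K+(a-b)U\big)/b$ an integer. Such a $U$ exists and is $O(K)$, hence $D=O(K)$ as well, so the padded instance has size polynomial in $K$ while all its numerical parameters stay bounded by constants; the transformation is therefore a polynomial-time reduction that preserves \emph{strong} NP-hardness.

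For correctness I would argue both implications through this dummy/dead structure. If the matching instance is a yes-instance, the feasible allocation of Lemma~\ref{thm-feasi} on the $2K$ real subcarriers, extended by letting each dummy user transmit on its dedicated subcarrier and leaving the dead subcarriers unused, satisfies all constraints. Conversely, since each dummy user has $\gamma>0$ and positive direct gain only on its own dedicated subcarrier, in any feasible allocation every dummy user is \emph{forced} to occupy that subcarrier; by the OFDMA constraint these $U$ subcarriers are then unavailable to---and in any case useless for---the real users, while the dead subcarriers carry no rate for anyone. Hence the real users must meet their targets using only the $2K$ original subcarriers under the OFDMA constraint, i.e.\ they face exactly the instance of the Basic Lemma, which is feasible iff the matching instance is a yes-instance.

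The main obstacle, and the step that forces the dummy-user device, is the regime $1<c<2$: here the subcarrier-to-user ratio must be pushed \emph{below} $2$, which cannot be achieved by adding subcarriers and instead requires injecting users that consume as few subcarriers as possible. Because $\gamma>0$ compels every user to occupy at least one subcarrier, the best one can do is one dedicated subcarrier per dummy user; driving $U$ upward then lets the overall ratio approach $1$, and the congruence and lower-bound bookkeeping above is what guarantees the ratio can be made to equal $c$ exactly while keeping the dummy users provably non-interfering.
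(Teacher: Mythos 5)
Your proposal is correct, and it proves the theorem by a genuinely different construction than the paper's. Both arguments share the same skeleton---pad the $c=2$ instance of Lemma~\ref{thm-feasi} with extra users and subcarriers until the ratio is exactly $c$---but they differ in how the padding is made harmless. You force inertness through \emph{zero} direct-link gains: each dummy user has positive gain only on its dedicated subcarrier, real users have zero gain there, and the dead subcarriers have zero gain for everyone, so the forcing argument (``every dummy user must sit on its own subcarrier, and the padding cannot help the real users'') is immediate and the equivalence with the Basic Lemma instance needs no case analysis. The paper instead keeps every direct gain strictly positive ($1$ or $0.25$) and tunes the Type-II parameters (noise $0.3$, budget $3$, rate target $\log_2 11$, or $((c-2)K+2)\log_2 11$ for the single Type-II user when $c>2$) so that cross-type transmission is possible in principle but provably insufficient to meet any rate target; this requires a counting argument that the paper only asserts (``one can check''), but it buys a somewhat stronger conclusion: hardness holds even on instances where all channel gains are positive and within constant factors of one another, i.e., it does not hinge on degenerate unusable subcarriers. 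Two further differences favor your write-up: your congruence bookkeeping ($c=a/b$, $U\equiv -K \pmod{b}$, $U=O(K)$) handles integrality of the padding explicitly, whereas the paper tacitly assumes quantities like $(c-1)K$ and $(2-c)K$ are integers; and for $c>2$ the paper must make one dummy user absorb all $(c-2)K+2$ surplus subcarriers (with a rate target growing with $K$, still polynomially bounded so strong NP-hardness is preserved), while your dead subcarriers require no absorbing user at all.
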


%\rev{It is worthwhile remarking that
{{Remark} 1: Problem \eqref{problem-simp} remains strongly NP-hard if the per-subcarrier power budget constraints $P_k^n\geq p_k^n\geq 0~(k\in\K,~n\in\N)$ there are replaced by the total power constraints $\sum_{n\in\N} p_k^n\leq P_k~(k\in\K)$ or $\sum_{k\in\K}\sum_{n\in\N}p_k^n\leq P.$ By setting $P_k=5~(k\in\K)$ or $P=5K$ and using the same argument as in the proof of Lemma \ref{thm-feasi} and Theorem \ref{thm-feasi2}, the strong NP-hardness of the corresponding feasibility problems can be shown. In fact, all strong NP-hardness results of problems \eqref{problem-simp} and \eqref{utility} in this paper also hold true for problems with either of the above two total power constraints.}

{Remark 2: Another extension of problem \eqref{problem-simp} is the so-called joint subcarrier and bit allocation problem \cite{JSAC,newh}. The goal of the joint subcarrier and bit allocation problem is to allocate subcarriers to users and at the same time allocate transmission bits to each user-subcarrier pair such that the total transmission power is minimized and the OFDMA constraints and all users' transmission requirements are satisfied. Mathematically, the problem can be formulated as
\begin{equation}\label{subcarrier-bit}
\begin{array}{cl}
\displaystyle \min_{\left\{p_k^n\right\},\,\left\{r_k^n\right\}} & \displaystyle \sum_{k\in\K}\sum_{n\in\N}p_k^{n}\\[15pt]%+\lambda\sum_{g=1}^{G}\|\left(\sum_{k=1}^K\bp_k\right)_{g}\|_{2}\\[20pt]%\label{obj}
\text{s.t.} & \displaystyle \sum_{n\in\N} r_k^n\geq \gamma_k,~k\in\K,\\[15pt]
& \displaystyle p_k^n={(2^{r_k^n}-1)\eta_k^n}/{\alpha_{k,k}^n},~k\in\K,~n\in\N, \\[5pt]
    %& \sum_{n=1}^Np_k^{n}\leq \bar p_{k},~k=1,2,..,K, \\[10pt]
    & r_k^n\in\left\{r_1,r_2,...,r_m,0\right\},~k\in\K,~n\in\N,\\[5pt]
    & p_k^np_j^n=0,~\forall~j\neq k,{~k,\,j\in\K,}~n\in\N.%\label{cons}
    \end{array}
\end{equation}The second constraint in \eqref{subcarrier-bit} says that ${(2^{r_k^n}-1)\eta_k^n}/{\alpha_{k,k}^n}$ (cf. \eqref{rk}) is necessary for user $k$ transmitting on subcarrier $n$ to achieve transmission rate $r_k^n$, and the third constraint in \eqref{subcarrier-bit} enforces $r_k^n$ to take values in the possible transmission rate set $\left\{r_1,r_2,...,r_m,0\right\}.$
%$\left\{r_1,r_2,...,r_m,0\right\}$ is the set of the possible transmission rates,

Given a total power budget $P>0$, the decision version of problem \eqref{subcarrier-bit} is to ask
whether there exists a feasible power and bit allocation strategy such that the optimal value of \eqref{subcarrier-bit} is less than
or equal to $P.$ %and all constraints there are satisfied.
%\begin{equation}\label{subcarrier-bit-feasibility}\left\{\!\!\!\!\!
%\begin{array}{cl}
%&\displaystyle \sum_{k\in\K}\sum_{n\in\N} p_k^n\leq P,\\[15pt]
%%&\sum_{k\in\K}\sum_{n\in\N}p_k^{n}\\[10pt]%+\lambda\sum_{g=1}^{G}\|\left(\sum_{k=1}^K\bp_k\right)_{g}\|_{2}\\[20pt]%\label{obj}
% & \displaystyle \sum_{n\in\N} r_k^n\geq \gamma_k,~k\in\K,\\[15pt]
% & \displaystyle p_k^n={(\exp\left({r_k^n}\right)-1)\eta_k^n}/{\alpha_{k,k}^n},~k\in\K,~n\in\N, \\[5pt]
%    %& \sum_{n=1}^Np_k^{n}\leq \bar p_{k},~k=1,2,..,K, \\[10pt]
%    & r_k^n\in\left\{r_1,r_2,...,r_m,0\right\},~k\in\K,~n\in\N,\\[5pt]
%    & p_k^np_j^n=0,~\forall~j\neq k,\rev{~k,\,j\in\K,}~n\in\N,%\label{cons}
%    \end{array}\right.
%\end{equation}where $P>0$ is a given total power budget.
It was shown in \cite{newh} that the decision version of problem \eqref{subcarrier-bit} is strongly NP-hard when $m\geq 2$. The proof is based on a polynomial time transformation from the scheduling problem\cite{Complexitybook}.
In fact, by setting $P=5K, m=2, r_1=2, r_2=1$ and using the same argument as in Lemma 3.1 and Theorem 3.1, we can also show the strong NP-hardness of
the decision version of problem \eqref{subcarrier-bit}.

}

%\textbf{Remark}:
%[[[xxxxxxxxxxxxxxxxxxx]]]In problem \eqref{problem}, even without the OFDMA constraint, check the feasibility of the problem is still strongly NP-hard. We only need to set the crosstalk interference $\alpha_{kj}^n$ to be sufficiently large\cite{hayashi} (i.e., $\alpha_{kj}^n=1,~\forall~k\neq j$), then the optimal solution is automatically OFDMA. We conclude this result as the following result.
%\begin{tuilun}
%  Checking the feasibility of problem
%     \begin{equation}\left\{\!\!\!\!
% \begin{array}{cl}
%    & R_{k}\geq \gamma_{k},~k\in\K, \\[10pt]
%    %& \sum_{n=1}^Np_k^{n}\leq \bar p_{k},~k=1,2,..,K, \\[10pt]
%    & \bar p_k^n\geq p_k^n\geq 0,~k\in\K,~n\in\N
%    \end{array}\right.
%  \end{equation}is strongly NP-hard.
%\end{tuilun}

\subsection{Strong NP-Hardness of Utility Maximization Problem \eqref{utility} when $N/K=c>1$}\label{sub-iii}
{In this subsection, we study the computational complexity of} the {system} utility maximization problem \eqref{utility} for the multi-user OFDMA system, where the utility is one of the four system utility functions introduced in Section \ref{sec-model}. Theorem \ref{thm-uti2}
establishes the strong NP-hardness of the problem for the general case $c>1$. The proof of Theorem \ref{thm-uti2} can be found in Appendix \ref{app-thm-uti2}.
%
%\begin{dingli}\label{thm-uti}The system utility maximization problem \eqref{utility} with $H=H_1, H_2, H_3,$ or $H_4$
%is strongly NP-hard when $N/K=2.$
%\end{dingli}
%
%

\begin{dingli}\label{thm-uti2}Given any constant $c>1,$ the system utility maximization problem \eqref{utility} with $H=H_1, H_2, H_3,$ or $H_4$ is strongly NP-hard when $N/K=c.$
\end{dingli}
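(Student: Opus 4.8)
The plan is to reduce from the same strongly NP-complete $3$-dimensional matching problem used in Lemma \ref{thm-feasi}, recycling essentially the same OFDMA instance but replacing the per-user rate targets with a common per-user total power budget and reading off a single utility threshold. Concretely, given a matching instance with sets $\X,\Y,\Z$ of size $K$ and relation $\R$, I would first treat the base case $N/K=2$: build the $K$-user, $2K$-subcarrier system exactly as in \eqref{p}--\eqref{alpha} (so $\K=\X$, $\N=\Y\cup\Z$, per-subcarrier budgets $3$ on $\Y$ and $2$ on $\Z$, with $\eta_k^n$ and $\alpha_{k,k}^n$ as in \eqref{eta}--\eqref{alpha}), and then impose the per-user total power budget $P_k=5$ in \eqref{utility}. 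The claim to be proved is that, for each of $H_1,H_2,H_3,H_4$, the optimal utility equals $3$ when the matching instance is a yes-instance and is strictly smaller otherwise; deciding whether the optimal utility is $\ge 3$ therefore solves $3$-dimensional matching.

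The analytical core is a single unconditional sum-rate bound. Since the OFDMA constraint puts at most one user on each subcarrier, and since on any subcarrier $n$ the rate is $\log_2(1+\alpha_{k,k}^np_k^n/\eta_k^n)$ with $\alpha_{k,k}^n\le 1$, I would bound each per-subcarrier contribution using the power cap and the smallest admissible noise: a subcarrier in $\Y$ contributes at most $\log_2(1+3/1)=2$ and a subcarrier in $\Z$ at most $\log_2(1+2/2)=1$. Summing over the $K$ subcarriers of each type gives $\sum_{k\in\K}R_k\le 2K+K=3K$, hence $H_1\le 3$ for every feasible allocation, and therefore $H_2,H_3,H_4\le 3$ as well by the chain $H_1\ge H_2\ge H_3\ge H_4$ recorded in Section \ref{sec-model}.

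Next I would characterize equality. The sum-rate attains $3K$ only when every subcarrier is used and meets its per-subcarrier maximum, which forces each $\Y$-subcarrier to be driven at power $3$ with gain $1$ and noise $1$, and each $\Z$-subcarrier at power $2$ with gain $1$ and noise $2$; this consumes total power $3K+2K=5K$. Here the budget $P_k=5$ does the essential work: since the total is exactly $5K$ across $K$ users each limited to $5$, every user must spend exactly $5$, and the only way to reach the caps summing to $5$ is one subcarrier in $\Y$ at power $3$ plus one subcarrier in $\Z$ at power $2$, both with gain $1$ (i.e.\ a triple of $\R$). A per-user budget is indispensable, since without it a single user could cover several good subcarriers and equality would no longer encode a matching. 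The resulting partition of $\Y\cup\Z$ into such pairs is precisely a perfect $3$-dimensional match; conversely a match yields the allocation \eqref{power} with every $R_k=3$, so all rates are equal and $H_1=H_2=H_3=H_4=3$. Thus the optimum of each $H_i$ equals $3$ iff a matching exists: in the yes-case the explicit allocation attains $3$, and in the no-case $\max\sum_kR_k<3K$ gives $H_1^*<3$, whence $H_i^*\le H_1^*<3$ for $i=2,3,4$ because $H_i\le H_1$ holds pointwise.

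To pass from $N/K=2$ to an arbitrary constant $c>1$, I would pad the instance as in the proof of Theorem \ref{thm-feasi2}, but tuned to keep all four utilities peaked at $3$. For $c>2$, I would append $(c-2)K$ unusable dummy subcarriers (zero per-subcarrier budget for every user), leaving the sum-rate bound and the equality analysis untouched while setting $N/K=c$. For $1<c<2$, I would instead add dummy users, each owning one dedicated subcarrier (gain $1$, noise $1$, budget large enough to attain rate exactly $3$) forbidden to all others; choosing the number of dummy users as $(2-c)K/(c-1)$ with $K$ in a suitable residue class makes $N/K=c$ an integer ratio. In the yes-case every real and dummy user then has rate $3$, so all four utilities still equal $3$, while in the no-case the real block alone forces $H_1^*<3$. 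The transformation is polynomial and uses only bounded data, so strong NP-hardness is inherited. The main obstacle I anticipate is precisely this $1<c<2$ padding for $H_2,H_3,H_4$: because dummy users' rates enter these utilities directly, the dummies must be pinned to rate exactly $3$ so that the equality condition $R_1=\cdots=R_K$ behind the chain $H_1\ge H_2\ge H_3\ge H_4$ is preserved and the uniform threshold $3$ stays valid for all four utilities at once; checking that no cross-use between the real and dummy blocks can inflate any $H_i$ above $3$ is the delicate step.
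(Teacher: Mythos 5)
Your proposal is correct, and for the base case $N/K=2$ it matches the paper's proof in its essentials: the same instance built from $3$-dimensional matching via \eqref{p}--\eqref{alpha}, the same per-user budget $P_k=5$, the same threshold $3$, and the same use of the chain $H_1\geq H_2\geq H_3\geq H_4$ to treat all four utilities together. You differ in execution at two points, and both differences are worth recording. First, to prove $H_1\leq 3$ the paper relaxes the sum-rate problem to a pooled single-user convex problem \eqref{relaxation}--\eqref{relaxation2} (total power $5K$, best-case channels), solves it by convexity to get the value $3K$, and then needs a separate counting step (each user needs at least two subcarriers to reach rate $3$, and there are only $2K$ of them) to pass from ``all rates equal $3$'' to a matching; you instead bound each subcarrier's contribution by its cap ($2$ on $\Y$, $1$ on $\Z$) and characterize equality through the power-budget identity $3a+2b=5$, which forces exactly one $\Y$- and one $\Z$-subcarrier per user. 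Your route is more elementary (no convex optimization is needed) and makes the role of the budget $P_k=5$ explicit, whereas the paper's one-to-one assignment comes from subcarrier counting instead. Second, and more substantively, your padding for general $c>1$ is genuinely different: the paper reuses the Type-I/Type-II construction of Theorem \ref{thm-feasi2}, in which the dummy users attain rate $\log_2 11\neq 3$, so the threshold becomes non-uniform (e.g.\ $3(c-1)+(2-c)\log_2 11$ for $H_1$), the yes-case rates are unequal, and the equality condition behind the chain no longer applies directly to $H_2,H_3,H_4$ --- exactly the part the paper dismisses with ``similar results hold, we omit the proof.'' Your inert padding (zero-budget dead subcarriers for $c>2$; dummy users pinned to rate exactly $3$ on dedicated subcarriers with zero cross-budgets for $1<c<2$) keeps every user's rate equal to $3$ in the yes-case, so the single threshold $3$ and the chain argument work verbatim for all four utilities; this actually supplies the verification the paper omits. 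Two details you should pin down: the dummy users' budget must be \emph{exactly} $7$, so that their rate is capped at exactly $3$ --- a budget that is merely ``large enough'' would let dummies push $H_1$ above $3$ on a no-instance and break the reduction; and you rely on zero power budgets and zero cross-gains, which the formulation \eqref{utility} indeed permits (no positivity is assumed), but which you should state explicitly since your construction leans on such degenerate parameters more heavily than the paper's does.
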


 %Some remarks on Theorems \ref{thm-uti} and \ref{thm-uti2} are in order. First,
 %We remark that
 {Remark 3:} The result in Theorem \ref{thm-uti2} is different from the one in \cite{hayashi}. It was shown in \cite{hayashi} that the sum-rate utility maximization problem \eqref{utility} (i.e., $H=H_1$) is NP-hard when the number of users is equal to $2.$ The proof is based on a polynomial time transformation from the equipartition problem \cite{Complexitybook}, which is known to be NP-complete but not strongly NP-complete. Theorem \ref{thm-uti2} shows that the sum-rate utility maximization problem is \emph{strongly} NP-hard. The other three utility functions ($H=H_2,H_3,$ or $H_4$) are not considered in \cite{hayashi}. The reference \cite{complexity} proved the NP-hardness of the three utility maximization problems ($H=H_2,H_3,$ or $H_4$) in the two-user case by establishing a polynomial time transformation from the equipartition problem. However, the proof of \cite{complexity} is nonrigorous, since the given equipartition problem has a yes answer does not imply that the transmission rate of the two users (at the solution of the corresponding utility maximization problem) is equal to each other, and vice versa. The complexity status of problem \eqref{utility} with the proportional fairness utility, the harmonic mean utility, and the min-rate utility remains unknown when $K\geq 2$ is fixed.

{In this section, we have shown that both the total power minimization problem \eqref{problem-simp} and the system utility maximization problem \eqref{utility} are strongly NP-hard. %when the ratio of the number of subcarriers and the number of users is equal to any given constant number $c>1.$
The basic idea of the proof is establishing a polynomial time transformation from the $3$-dimensional matching problem to the decision version of problems \eqref{problem-simp} and \eqref{utility}. The complexity result suggests that there are not polynomial time algorithms which can solve problems \eqref{problem-simp} and \eqref{utility} to global optimality (unless P$=$NP). Therefore, one should abandon efforts to find globally optimal subcarrier and power allocation strategy for problems \eqref{problem-simp} and \eqref{utility}, and determining an approximately optimal subcarrier and power allocation strategy is more realistic in practice.
}
%are not considered in \cite{hayashi}.
%Here we should mention that the power minimization problem \eqref{problem-simp} considered in Section \ref{sec-hard} is not studied in \cite{hayashi}.

\section{Easy Cases}\label{sec-easy}

 In this section, we identify some easy cases when problem \eqref{problem-simp} or problem \eqref{utility} can be solved in polynomial time.
Before doing this, we introduce a concept called \emph{strong} polynomial time algorithm.
A problem is said to admit a strong polynomial time algorithm if there exists an algorithm satisfying the following two conditions:
 \begin{itemize}
   \item [a)] the complexity of the algorithm (when applied to solve the problem) depends only on the dimension of the problem and is a polynomial function of the dimension;
   \item [b)] the algorithm solves the problem to global optimality (not just $\epsilon$-global optimality).
 \end{itemize} We remark that so far we do not know whether there exists a strong polynomial time algorithm to solve the general linear programming.
 %, which is believed to be one important problem in optimization.
 When the interior-point algorithm is applied to solve the linear programming, it is only guaranteed to return an $\epsilon$-optimal solution in polynomial time and the complexity of the interior-point algorithm depends on the factor $\log_2(1/\epsilon)$\cite{com3}. The best complexity results of solving the linear programming is still related to the condition number of the constraint matrix \cite{VavasisYe}.

{In the following subsections, we identify four (strong) polynomial time solvable subclasses of problems \eqref{problem-simp} and \eqref{utility}. More specifically, we first show that both problem \eqref{problem-simp} and problem \eqref{utility} are (strongly) polynomial time solvable when there is only one user in the system; see Subsection \ref{Sub-K=1} and Subsection \ref{Subsec-K=1}, respectively. The (extended) ``water-filling'' technique plays a fundamental role in proving the polynomial time complexity. Then, we show in Subsection \ref{SubsecN=K} that problem \eqref{problem-simp} is strongly polynomial time solvable when the number of subcarriers is equal to the number of users. In this case, we can reformulate problem \eqref{problem-simp} as an assignment problem for a complete bipartite graph, which can be solved in strong polynomial time. Finally, we show the polynomial time complexity of problem \eqref{utility} with sum-rate utility without the total power constraint by transforming it into the polynomial time solvable Hitchcock problem in Subsection \ref{Subsec-Nopower}.
}

  %It is easy to see that when there is a single user ($K=1$ and arbitrary $N$), problem \eqref{problem} is polynomial time solvable. In particular,
 \subsection{Polynomial Time Solvability of Problem \eqref{problem-simp} when $K=1$}\label{Sub-K=1}
  When there is only one user (i.e., $K=1$), problem \eqref{problem-simp} becomes
 \begin{equation}\label{K=1}
 \begin{array}{rl}
\displaystyle {P^*}=\min_{\{p^n\}} &\displaystyle \sum_{n\in\N}p^{n}\\[5pt]%+\lambda\sum_{g=1}^{G}\|\left(\sum_{k=1}^K\bp_k\right)_{g}\|_{2} \\[20pt]
\text{s.t.} & \displaystyle\sum_{n\in\N}\log_2\left(1+\frac{\alpha^np^{n}}{\eta^n}\right)\geq \gamma, \\[15pt]
    & P^n\geq p^n\geq 0,~n\in\N.
    \end{array}
\end{equation}
We claim that solving problem \eqref{K=1} is equivalent to finding a minimal $P$ such that the {optimal} value of problem
\begin{equation}\label{sub2}
\begin{array}{cl}
\displaystyle \max_{\{p^n\}} & \displaystyle \sum_{n\in\N}\log_2\left(1+\frac{\alpha^np^{n}}{\eta^n}\right)\\[15pt]%\label{obj}
\text{s.t.} & \displaystyle\sum_{n\in\N}p^n\leq P, \\[15pt]
    &  P^n\geq p^n\geq 0,~n\in\N
    \end{array}\end{equation} is equal to $\gamma.$ This is an important observation towards obtaining the closed-form solution of problem \eqref{K=1}. In fact, if the optimal value of problem \eqref{K=1} is {$P^*$}, then the optimal value of problem \eqref{sub2} {with $P=P^*$} is $\gamma;$ {and vice versa.}
    %and if the optimal value of problem \eqref{sub2} \rev{with $P=P^*$} is $\gamma,$ then the optimal value of problem \eqref{K=1} is \rev{$P^*.$}
    In more details, for any fixed $P>0$, problem \eqref{sub2} is strictly convex with respect to $\{p^n\}$ and hence has a unique solution. Therefore,
    the set defined by
    \begin{equation*}\left\{\!\!\!\!\!
\begin{array}{cl} & \displaystyle \sum_{n\in\N}\log_2\left(1+\frac{\alpha^np^{n}}{\eta^n}\right)\geq \gamma,\\[15pt]%\label{obj}
     & \displaystyle\sum_{n\in\N}p^n\leq {P^*}, \\[15pt]
    %& \sum_{n=1}^Np_k^{n}\leq \bar p_{k}, \\[10pt]
    &  P^n\geq p^n\geq 0,~n\in\N
    \end{array}\right.\end{equation*} contains only one point, which must be the solution of problem \eqref{sub2} {with $P=P^*$}. Hence, problems \eqref{K=1} and \eqref{sub2} share the same solution.

The
 solution to problem \eqref{K=1} or problem \eqref{sub2} is given by the following extended water-filling solution
\begin{equation}\label{pkn2}p^n=\min\left\{P^n, \left(\frac{1}{\lambda}-\frac{\eta^n}{\alpha^n}\right)_+\right\},~n\in\N,\end{equation}where \emph{$\lambda$ is chosen such that the objective value of problem \eqref{sub2} is equal to $\gamma,$} and $(x)_+=\max\left\{x,0\right\}.$ After obtaining the optimal $\lambda$, the optimal value $P^*$ of problem \eqref{K=1} is given by
$$P^*=\sum_{n\in\N} p^n=\sum_{n\in\N}\min\left\{P^n, \left(\frac{1}{\lambda}-\frac{\eta^n}{\alpha^n}\right)_+\right\}.$$ For completeness, we show in Appendix \ref{app-2} that \eqref{pkn2} indeed is the solution to problem \eqref{sub2}
and $\lambda$ is actually the Lagrangian multiplier corresponding to the constraint $\sum_{n\in\N}p^n\leq P$.

 We point out that the water-filling solution \eqref{pkn2} extends the conventional water-filling solution
  \begin{equation}\label{waterfilling}p^n=\left(\frac{1}{\lambda}-\frac{\eta^n}{\alpha^n}\right)_+,~n\in\N\end{equation}
  in \cite{cover} in the following two respects. First, the conventional water-filling solution \eqref{waterfilling} solves the power control problem \eqref{sub2} without the power budget constraints per subcarrier, while the power control problem \eqref{sub2} not only involves the total power constraint but also involves the power budget constraints per subcarrier. Second, the parameter $\lambda$ in the conventional water-filling solution \eqref{waterfilling} is chosen such that $\sum_{n\in\N}p^n=P,$ while the parameter $\lambda$ in \eqref{pkn2} is chosen such that $$\sum_{n\in\N}\log_2\left(1+\frac{\alpha^np^{n}}{\eta^n}\right)=\gamma.$$

The only left problem now is to find $\tau:=1/\lambda$ in \eqref{pkn2} such that the objective value of problem \eqref{sub2} is equal to $\gamma.$
A natural way to find the desired $\tau^*$ is to perform a binary search on $\tau,$ since the objective function of \eqref{sub2} is an increasing function with respect to $\tau.$ As is known, the efficiency of the binary search depends on the initial search interval of $\tau.$  To derive a good lower and upper bound, we first order the sequence
\begin{equation} \left\{b^n\right\}_{n=1}^{2N}:=\left\{\frac{\eta^n}{\alpha^n}, \frac{\eta^n}{\alpha^n}+P^n\right\}_{n=1}^N,
 \label{add1}\end{equation} and without loss of generality, suppose that
 \begin{equation} b^1\leq b^2\leq\cdots\leq b^{2N}.  \label{add2}\end{equation} Notice that it takes $O(2N\log_2(2N))$ operations to order $\left\{b^n\right\}_{n=1}^{2N}.$ Then we calculate the objective values of problem \eqref{sub2} at $\tau=b^n~(n=1,2,...,2N),$ denoting {them} by $\left\{v^n\right\}_{n=1}^{2N}$. It follows from the monotonicity {of $\left\{b^n\right\}$} that $$v^1\leq v^2\leq\cdots\leq v^{2N}.$$
If there exists an index $n^*$ such that $v^{n^*}=\gamma,$ then $\tau^*=b^{n^*}.$ Otherwise, we have $v^{n^*}< \gamma <v^{n^*+1}$ for some $n^*$ and hence $\tau^*\in (b^{n^*},b^{n^*+1}).$  In this case, we can start the binary search from $(b^{n^*},b^{n^*+1})$ and it takes at most $\log_2\left((b^{n^*+1}-b^{n^*})/\epsilon\right)$ iterations to obtain an $\epsilon$-optimal $\tau.$
{Assume that $$\max_{n}\left\{\frac{\eta^n}{\alpha^n}+P^n\right\}\leq R,$$ where $R$ is a sufficiently large constant\footnote{This is a standard assumption in the complexity analysis of the interior-point methods for solving convex conic programming \cite{com3}.}. Then, we can conclude that problem \eqref{K=1} is polynomial time solvable and the worst case complexity of solving it is $$O(\log_2(R/\epsilon)+N\log_2\left(N\right)).$$
%can be solved within a total of $O(\log(R/\epsilon)N\log N)$ arithmetic operations in the worst case.
}

{Remark 4: The complexity status of the total power minimization problem \eqref{problem-simp} remains unknown when $K\geq 2$ is fixed.}
%\rev{[[[why it is a polynomial time algorithm???]]]}

%
%\rev{Remark: when $K\geq 2$ NP-hard??????
%}
\subsection{Strong Polynomial Time Solvability of Problem \eqref{utility} when $K=1$}\label{Subsec-K=1}

 If the system has only one user, all the four system utility functions coincide and problem \eqref{utility} becomes
  \begin{equation}\label{uK=1}
\begin{array}{cl}
\displaystyle \max_{\{p^n\}} & \displaystyle \sum_{n\in\N}\log_2\left(1+\frac{\alpha^np^{n}}{\eta^n}\right)\\[15pt]%\label{obj}
\text{s.t.} & \displaystyle \sum_{n\in\N}p^n\leq P, \\[15pt]
    %& \sum_{n=1}^Np_k^{n}\leq \bar p_{k}, \\[10pt]
    &  P^n\geq p^n\geq 0,~n\in\N.
    \end{array}\end{equation} %It seems that problem \eqref{uK=1} is same as problem \eqref{sub2}
The main difference between problem \eqref{uK=1} and problem \eqref{sub2} lies in that, $P$ is a given constant in problem \eqref{uK=1}, but is an unknown parameter in problem \eqref{sub2}. This feature of $P$ in problem \eqref{uK=1} enables us to design a strong polynomial time algorithm for the problem. {Without loss of generality, we assume that the parameters in problem \eqref{uK=1} satisfy $P\leq\sum_{n\in\N}P^n;$ otherwise, the solution to problem \eqref{uK=1} is $p^n=P^n$ for all $n\in\N.$}

Notice that the solution to problem \eqref{uK=1} is given by the extended water-filling solution \eqref{pkn2}, where $\tau:=1/\lambda$ is chosen such that $\sum_{n\in\N}p^n = P.$ The desired {${\tau^*}$} here can be found by solving a univariate linear equation. In fact,
in a similar fashion {as in Subsection \ref{Sub-K=1}}, we can order $\left\{b^n\right\}_{n=1}^{2N}$ in \eqref{add1} and assume that \eqref{add2} holds.  Then we calculate the total transmission power $\sum_{n\in\N}p^n$ at $\tau=b^n~(n=1,2,...,2N)$, denoting {them} by $\left\{u^n\right\}_{n=1}^{2N}$. It follows from the monotonicity {of $\left\{b^n\right\}$} that
$$u^1\leq u^2\leq \cdots\leq u^{2N}.$$
If there is an index ${n^*}$ such that $P=u^{n^*}$, then {${\tau}^*=b^{n^*}$}. Otherwise, we have that
$u^{n^*}< P <u^{n^*+1}$ for some $1\leq n^*< N$.   Then for each $n=1,2,...,N,$ we have
    \begin{align*}\label{pkn3}
  &\min\left\{P^n, \left({\tau}-\frac{\eta^n}{\alpha^n}\right)_+\right\} \\[5pt]
  =&\left\{\begin{array}{ll}
       P^n, \quad \text{if}~\frac{\eta^n}{\alpha^n}+P^n\leq b^{n^*};\\[10pt]
       {\tau}-\frac{\eta^n}{\alpha^n}, \quad \text{if}~\frac{\eta^n}{\alpha^n}< b^{n^*}<\frac{\eta^n}{\alpha^n}+P^n;\\[10pt]
       0, \quad \text{if}~\frac{\eta^n}{\alpha^n}\geq b^{n^*}.
    \end{array}\right.
  \end{align*}  Therefore, the problem of finding the desired ${\tau}^*$ reduces to solve a univariate linear equation in terms of $\tau,$ and the desired ${\tau}^*$ is obtained in a closed form. From the above discussion, we see that the complexity of finding the desired ${\tau}^*$ is $O(N\log_2(N)).$

%We remark that
{Remark 5:} The reference \cite{complexity} has addressed problem \eqref{uK=1}, but without power budget constraints per subcarrier. {The reference \cite{complexity} has also shown that the sum-rate maximization problem \eqref{utility} is NP-hard when $K\geq2.$}

\subsection{Strong Polynomial Time Solvability of Problem \eqref{problem-simp} when $N=K$}\label{SubsecN=K}
When the number of users is equal to the number of subcarriers (i.e., $N=K$), we transform problem \eqref{problem-simp} into an assignment problem for a complete bipartite graph with $2K$ nodes in polynomial time. Specifically, we construct the bipartite graph ${{G}}=(V,E)$ with
\begin{itemize}
  \item the node set  $$V=\left\{1,2,...,K,1',2',...,K'\right\},$$ where {$\K=\left\{1,2,...,K\right\}$ and $\N=\left\{1',2',...,K'\right\}$} correspond to the set of users and the set of subcarriers, respectively;
  \item the edge set $$E=\left\{(k,n)~|~k\in\left\{1,2,...,K\right\},~n\in\left\{1',2',...,K'\right\}\right\}$$ with the weight $w_{k}^n$ of edge $(k,n)$
  \begin{equation}\label{weight}
  w_{k}^n=\left\{\begin{array}{lr}
       \frac{\left(2^{\gamma_k}-1\right)\eta_k^n}{\alpha_{k,k}^n}, \quad \text{if}~\log_2\left(1+\frac{\alpha_{k,k}^nP_k^n}{\eta_{k}^n}\right)\geq \gamma_k;\\[15pt]
       \displaystyle\sum_{k\in\K}\sum_{n\in\N}P_k^n, \quad \text{otherwise}.
    \end{array}\right.
  \end{equation}
%\frac{\left(\exp\left(\gamma_k\right)-1\right)\eta_k^n}{\alpha_{k,k}^n}
\end{itemize}
Therefore, problem \eqref{problem-simp} can be equivalently reformulated as
\begin{equation}\label{assignment}
  \begin{array}{rl}
    \displaystyle \left\{\left(x_k^n\right)^*\right\}=\arg\min_{\left\{x_{k}^n\right\}} & \displaystyle \sum_{k\in\K}\sum_{n\in\N} w_{k}^nx_{k}^n \\[15pt]
\text{s.t.}
    & \displaystyle \sum_{k\in\K} x_{k}^n=1,~n\in\N, \\[12pt]
    & \displaystyle \sum_{n\in\N} x_{k}^n=1,~k\in\K,\\[12pt]
    & x_{k}^n\in\left\{0,1\right\},~k\in\K,~n\in\N.
  \end{array}
\end{equation} In the above problem, the binary variable $x_{k}^n$ is equal to $1$ if user $k$ transmits power on subcarrier $n$ and $0$ otherwise.
 The first constraint $\sum_{k\in\K} x_{k}^n=1~(n\in\N)$ stands for the OFDMA constraint, which requires that at most one user is allowed to transmit power on each subcarrier. The second constraint $\sum_{n\in\N} x_{k}^n=1~(k\in\K)$ requires that each user must transmit on one subcarrier to satisfy its specified transmission rate {requirement}.

 From \cite[Theorem 11.1]{combinatorial}, we know that the Hungarian method solves problem \eqref{assignment} to global optimality in $O(K^3)$ operations. The Hungarian method is in essence a primal-dual simplex algorithm for solving the linear program
 \begin{equation}\label{assignment-relax}
  \begin{array}{rl}
    \displaystyle \left\{\left(x_k^n\right)^*\right\}=\arg\min_{\left\{x_{k}^n\right\}} & \displaystyle \sum_{k\in\K}\sum_{n\in\N} w_{k}^nx_{k}^n \\[15pt]
\text{s.t.}
    & \displaystyle \sum_{k\in\K} x_{k}^n=1,~n\in\N, \\[12pt]
    & \displaystyle \sum_{n\in\N} x_{k}^n=1,~k\in\K,\\[12pt]
    & 0\leq x_{k}^n\leq 1,~k\in\K,~n\in\N,
  \end{array}
\end{equation}which is a relaxation of problem \eqref{assignment} by replacing $x_{k}^n\in\left\{0,1\right\}$ with $0\leq x_{k}^n\leq 1.$ In general, the primal-dual simplex method is not a polynomial time algorithm. When it is used to solve problem \eqref{assignment-relax}, however, it can return the \emph{global optimal integer solution} of problem \eqref{assignment-relax} in $O(K^3)$ operations \cite{combinatorial}.

 %\eqref{assignment} where $x_{k}^n\in\left\{0,1\right\}$ is replaced with $0\leq x_{k}^n\leq 1.$
 %\begin{equation}\label{assignment}
%  \begin{array}{cl}
%    \displaystyle \arg\min_{\left\{x_{k}^n\right\}} & \displaystyle \sum_{k=1}^K\sum_{n=1'}^{K'} w_{k}^nx_{k}^n \\[10pt]
%\text{s.t.}
%    & \sum_{k=1}^K x_{k}^n=1,~n=1',2',...,K', \\[3pt]
%    & \sum_{n=1'}^{K'} x_{k}^n=1,~k=1,2,...,K,\\[3pt]
%    & 0\leq x_{k}^n\leq 1,~k=1,2,...,K,n=1',2',...,K'.
%  \end{array}
%\end{equation}

 We remark that if the optimal value ${P^*}$ of problem \eqref{assignment} satisfies $$P^*<\sum_{k\in\K}\sum_{n\in\N}P_k^n,$$%\frac{(\exp\left(\gamma_k)-1\right)\eta_k}{\alpha_{k,k}^n},$$
 the optimal solution to \eqref{problem-simp} is
$$(p_k^n)^*=(x_{k}^n)^*\frac{\left(2^{\gamma_k}-1\right)\eta_k}{\alpha_{k,k}^n},~k\in\K,~n\in\N;$$ otherwise, we will have that $$P^*\geq\sum_{k\in\K}\sum_{n\in\N}P_k^n,$$%\frac{\left(\exp\left(\gamma_k\right)-1\right)\eta_k}{\alpha_{k,k}^n},$$
which indicates that problem \eqref{problem-simp} is infeasible.

{Remark 6:} In fact, problem \eqref{problem-simp} is also polynomial time solvable when $N=K+C,$ where $C\geq 1$ is a given constant integer. Specifically, we can first partition $K+C$ subcarriers into $K$ nonempty subsets $\left\{\N_n\right\}_{n=1'}^{K'}.$ Denote the number of ways to partition $K+C$ subcarriers into $K$ nonempty subsets by $S(K+C,K).$ We show in Appendix \ref{app-s} that $S(K+C,K)$ is upper bounded by $(K+C)^{2C}.$
%
%Finally, there are $O(K^{2c})$ ways to partition $K+c$ subcarriers into $K$ nonempty subsets (See ), and the complexity of the algorithm is $O(K^{2c+3})$.

After the partition of the subcarriers, the problem reduces to the case $N=K.$ %which is strongly polynomial time solvable.
  The only difference here is that the parameter $w_k^n~(k=1,2,...,K,~n=1',2',...,K')$ is set to be %given in the following way:
%the optimal value of problem
\begin{equation}\label{power-allocation2}
 \begin{array}{rl}
\displaystyle w_k^n=\min_{\left\{p_k^l\right\}_{l\in{\cal N}_n}} & \displaystyle \sum_{l\in {\cal N}_n}p_k^{l}\\[12pt]%+\lambda\sum_{g=1}^{G}\|\left(\sum_{k=1}^K\bp_k\right)_{g}\|_{2} \\[20pt]
\text{s.t.} & \displaystyle \sum_{l\in\N_n}\log_2\left(1+\frac{\alpha_{k,k}^lp_k^{l}}{\eta_k^l}\right)\geq \gamma_{k}, \\[15pt]
    %& \sum_{n=1}^Np_k^{n}\leq \bar p_{k},~k\in\K, \\[10pt]
    & P_k^l\geq p_k^l\geq 0,~l\in {\cal N}_n\\%[5pt]
   % & x_k^n\in\left\{0,1\right\},~k\in\K,~n\in\N,\\[10pt]
%    & \sum_{k=1}^N x_k^n\leq 1,~n\in\N,
    \end{array} \end{equation} if the above problem is feasible; otherwise %it is set to be
    $$w_k^n=\displaystyle\sum_{k=1}^K\sum_{l=1}^{N}P_k^l.$$ We know from Subsection \ref{Sub-K=1} that problem \eqref{power-allocation2} is polynomial time solvable {under a mild assumption.} Actually, if problem \eqref{power-allocation2} is feasible and $\N_n=\left\{l\right\}$ only contains a singleton, we have $$w_k^n=\frac{\left(2^{\gamma_k}-1\right)\eta_k^{l}}{\alpha_{k,k}^l},$$ which is the same as the one in \eqref{weight}.
    Then, for a given partition $p=1,2,...,S(K+C,K),$ we can use the Hungarian method to solve problem \eqref{assignment} in $O(K^3)$ operations, and assume the optimal value to be $v_p.$%, p=1,2,...,S(K+C,K)$ which is related to the partition.
    Therefore, the optimal value of the original problem is $$\min_{p=1,2,...,S(K+C,K)}\left\{v_p\right\}$$ if it is strictly less than $\sum_{k=1}^K\sum_{l=1}^{N}P_k^l;$ otherwise the original problem is infeasible. It follows from the above analysis that problem \eqref{problem-simp} with $N=K+C$ can be solved in $O(K^{2C+3})$ operations.

\subsection{Polynomial Time Solvability of Sum-Rate Maximization Problem \eqref{utility} without Total Power Constraint}\label{Subsec-Nopower}
If we drop the total power constraint $\sum_{n\in\N}p_k^{n}\leq P_{k}~(k\in\K),$ %from problem \eqref{utility}, %and let the utility function in \eqref{utility} be $H_1$,
then problem \eqref{utility} with $H=H_1$ becomes %the following sum-rate maximization problem
\begin{equation}\label{utility2}
 \begin{array}{cl}
\displaystyle \max_{\{p_k^n\}} & \displaystyle H_1(R_1,R_2,...,R_K) \\%[10pt]
\text{s.t.} %\quad & R_{k}\geq \gamma_{k},~k\in\K \\[10pt]
    %& \sum_{n=1}^Np_k^{n}\leq \bar p_{k},~k\in\K, \\%[10pt]
    & P_k^n\geq p_k^n\geq 0,~k\in\K,~n\in\N,\\[5pt]
    & p_k^np_j^n=0,~\forall~j\neq k,{~k,\,j\in\K,}~n\in\N.%\label{cons}
    \end{array}
\end{equation}
Problem \eqref{utility2} is always feasible as $p_k^n=0$ (for all $k$ and $n$) is a certificate for the feasibility.
Further, since problem \eqref{utility2} does not involve the total power constraint for each user, its optimal solution $p_k^n~(k\in\K,~n\in\N)$ is either $0$ or $P_k^n.$ To solve problem \eqref{utility2} in polynomial time, we consider
 transforming it into the polynomial time solvable Hitchcock problem \cite{combinatorial} (also known as transportation problem).

 {\bf Hitchcock problem}. \ Suppose there are $K$ sources of some commodity, each with a supply of $a_k\in\mathbb{R}^+~(k=1,2,...,K)$ units, and $N$ terminals, each of which has a demand of $d_n\in\mathbb{R}^+~(n=1,2,...N)$ units. Suppose that the unit cost of transporting the commodity from source $k$ to terminal $n$ is $c_k^n\in\mathbb{R}^+~(k=1,2,...,K,~n=1,2,...,N).$ The problem is how to satisfy the demands at a minimal cost?

 In fact, by setting $$a_k=N,~k=1,2,...,K;$$ $$d_n=1,~n=1,2,...N,~d_{N+1}=(K-1)N; $$
 $$c_k^n=\bar c-\log_2\left(1+\frac{\alpha_{k,k}^nP_k^n}{\eta_k^n}\right)\geq 0,~k=1,2,...,K,~n=1,2,...,N;$$$$c_k^{N+1}=0,~k=1,2,...,K,$$ where $\bar c=\max_{k,n}\left\{\log_2\left(1+\frac{\alpha_{k,k}^nP_k^n}{\eta_k^n}\right)\right\},$ we can see that
  problem \eqref{utility2} is equivalent to the following Hitchcock problem
\begin{equation}\label{hitch}
 \begin{array}{cl}
\displaystyle \min_{\left\{x_{k}^n\right\}} & \displaystyle \sum_{k=1}^K\sum_{n=1}^{N+1}c_k^nx_k^n \\[13pt]
\text{s.t.} & \sum_{k=1}^K x_k^n=1,~n=1,2,...,N,\\[5pt]
            & \sum_{k=1}^K x_k^{N+1}=(K-1)N,\\[5pt]
    & \sum_{n=1}^N x_k^n=N,~k=1,2,...,K,\\[5pt]
            & x_k^n=\left\{0,1\right\},~k=1,2,...,K,~n=1,2,...,N.
    \end{array}
\end{equation} In the above problem, the binary variable $x_{k}^n=1~(k=1,2,...,K,\,n=1,2,...,N)$ if user $k$ transmits full power $P_k^n$ on subcarrier $n$ and
$x_{k}^n=0$ if user $k$ does not transmit any power on subcarrier $n$.
 The constraint $\sum_{k=1}^K x_{k}^n=1$ and the constraint $x_k^n=\left\{0,1\right\}~(k=1,2,...,K)$ implies that at most one user is allowed to transmit power on each subcarrier $n$. However, one user is allowed to transmit on multiple subcarriers. The variables $x_k^{N+1}~(k=1,2,...,K)$ are {auxiliary} dummy variables, since $c_k^{N+1}=0$ for all $k=1,2,...,K.$
Moreover, we know from \cite[Theorem 13.3 and its Corollary]{combinatorial} that problem \eqref{hitch} is equivalent to the linear program %\left\{\left(x_k^n\right)^*\right\}=
 \begin{equation}\label{hitch2}
 \begin{array}{rl}
\displaystyle \min_{\left\{x_{k}^n\right\}} & \displaystyle \sum_{k=1}^K\sum_{n=1}^{N+1}c_k^nx_k^n \\[13pt]
\text{s.t.} & \sum_{k=1}^K x_k^n=1,~n=1,2,...,N,\\[5pt]
            & \sum_{k=1}^K x_k^{N+1}=(K-1)N,\\[5pt]
    & \sum_{n=1}^N x_k^n=N,~k=1,2,...,K,\\[5pt]
            & 0\leq x_k^n\leq 1,~k=1,2,...,K,~n=1,2,...,N.
    \end{array}
\end{equation}The equivalence between problems \eqref{hitch} and \eqref{hitch2} is because the linear equality constraint in \eqref{hitch2} satisfies the so-called \emph{totally unimodular} property\cite{combinatorial}, and thus all the vertices of the feasible set of problem \eqref{hitch2} are integer. Since the linear program \eqref{hitch2} is polynomial time solvable, the sum-rate maximization problem \eqref{utility2} is polynomial time solvable. Further, {suppose} $\left\{\left(x_k^n\right)^*\right\}$ is the optimal solution to problem \eqref{hitch2}, then the optimal solution $\left\{\left(p_k^n\right)^*\right\}$ to problem \eqref{utility2} is
 $$(p_k^n)^*=(x_k^n)^*P_k^n,~k=1,2,...,K,~n=1,2,...,N.$$ {It is worthwhile remarking} that the so-called $\alpha\beta$-algorithm \cite[Section 7.4]{combinatorial} can efficiently solve problem \eqref{hitch2}, although it is not a polynomial time algorithm.

{Remark 7:} In a similar fashion, we can show that the \emph{weighted} sum-rate maximization problem
\begin{equation}\label{utility3}
 \begin{array}{cl}
\displaystyle \max_{\{p_k^n\}} & \displaystyle \sum_{k=1}^K w_kR_k \\[15pt]
\text{s.t.} %\quad & R_{k}\geq \gamma_{k},~k\in\K \\[10pt]
    %& \sum_{n=1}^Np_k^{n}\leq \bar p_{k},~k\in\K, \\%[10pt]
    & P_k^n\geq p_k^n\geq 0,~k\in\K,~n\in\N,\\[5pt]
    & p_k^np_j^n=0,~\forall~j\neq k,{~k,\,j\in\K,}~n\in\N
    \end{array}
\end{equation}is also polynomial time solvable, where $w_k\geq 0~(k=1,2,...,K)$ are non-negative weights. Again, we can transform problem \eqref{utility3} into the Hitchcock problem in polynomial time. The corresponding parameters of the Hitchcock problem are the same as before, except
%$$a_k=N,~k=1,2,...,K; $$  $$d_n=1~(n=1,2,...N), ~d_{N+1}=(K-1)N; $$
$$c_k^n=\bar c-w_k\log_2\left(1+\frac{\alpha_{k,k}^nP_k^n}{\eta_k^n}\right)\geq 0$$ for $k=1,2,...,K,~n=1,2,...,N,$ and
%$$c_k^{N+1}=0,~k=1,2,...,K,$$
$$\displaystyle \bar c=\max_{k,n}\left\{w_k\log_2\left(1+\frac{\alpha_{k,k}^nP_k^n}{\eta_k^n}\right)\right\}.$$

%We use the following theorem to conclude Section \ref{sec-easy}.
{We summarize the main results in this section as the following theorem.}
\begin{dingli}The following statements are true.
\begin{itemize}
\item [-] Problem \eqref{problem-simp} is strongly polynomial time solvable when the number of users is equal to the number of subcarriers.
\item [-] Problem \eqref{problem-simp} is polynomial time solvable when there is only one user in the system.
 \item [-] Problem \eqref{utility} is strongly polynomial time solvable when there is only one user in the system.
 \item [-] The (weighted) sum-rate maximization problem \eqref{utility} without the total power constraint is polynomial time solvable.
\end{itemize}

\end{dingli}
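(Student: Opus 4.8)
The plan is to assemble this theorem from the four results established one-by-one in Subsections \ref{SubsecN=K}, \ref{Sub-K=1}, \ref{Subsec-K=1}, and \ref{Subsec-Nopower}; each bullet is precisely the conclusion of one subsection, so the proof amounts to collecting those constructions and verifying that the claimed \emph{strong} versus ordinary polynomial-time status is the one actually delivered by each algorithm. First I would note that nothing new needs to be constructed: the work is to quote the reduction or closed-form solution from the corresponding subsection and to read off its worst-case operation count.

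For the two power-minimization statements I would argue as follows. When $N=K$, I invoke the equivalence, shown in Subsection \ref{SubsecN=K}, between problem \eqref{problem-simp} and the bipartite assignment problem \eqref{assignment} with the weights \eqref{weight}; since the Hungarian method solves \eqref{assignment} exactly in $O(K^3)$ operations and this count depends only on the problem dimension (not on the data magnitudes or on an accuracy parameter $\epsilon$), this case is \emph{strongly} polynomial time solvable. When $K=1$, I invoke the extended water-filling characterization \eqref{pkn2} from Subsection \ref{Sub-K=1}: the solution is determined by a single level $\tau=1/\lambda$, and after sorting the $2N$ breakpoints \eqref{add1} in $O(N\log_2 N)$ time, $\tau$ is located by a binary search costing $O(\log_2(R/\epsilon))$. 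Because the governing equation $\sum_{n\in\N}\log_2(1+\alpha^n p^n/\eta^n)=\gamma$ is transcendental, only an $\epsilon$-optimal $\tau$ is obtained, so this case is polynomial time solvable but not claimed to be strongly so.

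For the two utility statements the arguments are analogous but the strong/ordinary distinction flips. When $K=1$, all four utilities coincide and problem \eqref{utility} reduces to \eqref{uK=1}; as shown in Subsection \ref{Subsec-K=1}, here $P$ is a fixed constant, so the level $\tau$ is the root of a \emph{univariate linear} equation after the same $O(N\log_2 N)$ sorting of \eqref{add1}, hence $\tau$ is obtained exactly in closed form and the case is strongly polynomial time solvable. For the (weighted) sum-rate maximization \eqref{utility2}--\eqref{utility3} without the total power constraint, I invoke the reduction of Subsection \ref{Subsec-Nopower} to the Hitchcock problem \eqref{hitch}, whose linear-programming relaxation \eqref{hitch2} has a totally unimodular constraint matrix and therefore integral vertices; since a linear program is polynomial time solvable, so is \eqref{utility2}.

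The only point requiring genuine care, rather than mechanical citation, is keeping the two notions of tractability straight: I would emphasize that the $N=K$ and the $K=1$ utility cases earn the label \emph{strongly} polynomial because they terminate with an exact combinatorial ($O(K^3)$) or closed-form ($O(N\log_2 N)$) answer whose cost is dimension-only, whereas the $K=1$ power case and the no-power-constraint sum-rate case are merely polynomial, the former because of the $\log_2(1/\epsilon)$ factor inherent in the binary search and the latter because it rests on general linear programming, for which no strongly polynomial algorithm is known (as remarked at the start of this section). No step presents a real obstacle; the closest thing to a subtlety is ensuring that each quoted complexity bound is genuinely independent of the numerical data in exactly those cases claimed to be strong.
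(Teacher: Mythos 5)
Your proposal is correct and matches the paper's treatment exactly: this theorem has no separate proof in the paper, being precisely the collection of the results established in Subsections \ref{SubsecN=K}, \ref{Sub-K=1}, \ref{Subsec-K=1}, and \ref{Subsec-Nopower}, which you cite for the corresponding bullets. Your bookkeeping of strong versus ordinary polynomial solvability (Hungarian method and exact closed-form water-filling level give strong polynomiality; the $\log_2(R/\epsilon)$ binary-search factor and the reliance on general linear programming for the Hitchcock reduction give only ordinary polynomiality) is also the paper's reasoning.
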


{In this section, we have identified four subclasses of problems \eqref{problem-simp} and \eqref{utility} which are (strongly) polynomial time solvable. By doing so, we successfully pick a subset of computationally tractable problems within the general class of strongly NP-hard joint subcarrier and power allocation problems.
}

\section{Concluding Remarks}\label{sec-conclusion}
Dynamic allocation of subcarrier and power resources in accordance with channel and traffic load changes can significantly improve the network throughput and spectral efficiency of the multi-user multi-carrier communication system where a number of users share some common discrete subcarriers.
A major challenge associated with joint subcarrier and power allocation is to find, for a given channel state, the globally optimal {subcarrier and power} allocation strategy to minimize the total transmission power or maximize the system utility function. This paper mainly studies the computational challenges of the joint subcarrier and power allocation problem for the multi-user OFDMA system. We have shown that the general joint subcarrier and power allocation problem for the multi-user OFDMA system is strongly NP-hard. The complexity result suggests that we should abandon {efforts} to
find globally optimal subcarrier and power allocation strategy for the general multi-user OFDMA system unless for some special cases (i.e., the case when there is only one user in the system, {or} the case when the number of users is equal to the number of subcarriers). The problem is shown to be (strongly) polynomial time solvable in {these} special cases. In a companion paper, we shall design efficient algorithms to solve the joint subcarrier and power allocation problem.

\section*{Acknowledgment}
The first author wishes to thank Professor Zhi-Quan (Tom) Luo of University of Minnesota for inviting him to visit Xidian University, where Professor Luo organized a summer seminar on optimization and its applications to signal processing and communications from May 21 to August 19, 2012. This work was started from then. The first author also would like to thank Dr. Peng Liu at Xidian University {and Dr. Qiang Li at Chinese University of Hong Kong} for many useful discussions.

\appendices

\section{Proof of Theorem \ref{thm-feasi2}}\label{app-thm-feasi2}
%\begin{proof}
%\emph{:}
{Before going into very details, let us first give a high level preview of the proof.
The basic idea of proving the strong NP-hardness of problem \eqref{problem-simp} with $c>1$ is to reduce it to the one with $c=2.$
 More specifically, we first partition all users into two types (Type-I and Type-II) and also all subcarriers into two types (Type-I and Type-II), where the number of Type-I subcarriers is required to be twice as large as the number of Type-I users. Then, we construct ``good'' channel parameters between Type-I (Type-II) users and Type-I (Type-II) subcarriers while ``bad'' channel parameters between Type-I (Type-II) users and Type-II (Type-I) subcarriers such that the only way for all users to satisfy their transmission rate requirements is that Type-I (Type-II) users will only transmit power on Type-I (Type-II) subcarriers. In addition, in our construction, all Type-II users and Type-II subcarriers are dummy ones, since all Type-II users' transmission rate requirements can easily be satisfied by transmitting full power on Type-II subcarriers. In this way, the problem of checking the feasibility of problem \eqref{problem-simp} with $c>1$ reduces to the one of checking whether all Type-I users' transmission rate targets can be met or not, where the number of Type-I subcarriers is twice larger than the number of Type-I users ($c=2$) as required in the partition.

 %a small size problem \eqref{problem-simp} with $c=2$ as required in the partition.% [[[????]]]
%Let us take the case $1<c<2$ as an example.
}

{Below is the detailed proof of Theorem \ref{thm-feasi2}.} By Lemma \ref{thm-feasi}, we only need to consider the {following two} cases: (i) $1<c<2$ and (ii) $c>2$.

In case (i), we partition $K$ users into $(c-1)K$ Type-I users and $(2-c)K$ Type-II users, and $N=cK$ subcarriers into $2(c-1)K$ Type-I subcarriers and $(2-c)K$ Type-II subcarriers. We construct the channel parameters of Type-I users on Type-I subcarriers in the same way as in the proof of Lemma \ref{thm-feasi} where $c=2$. Furthermore,
noise powers, direct-link channel gains, and power budgets of all Type-II users on Type-II subcarriers are set to $0.3,$ $1,$ and $3,$ respectively; these parameters of all Type-I users on Type-II subcarriers and all Type-II users on Type-I subcarriers are set to $3,0.25,$ and $1,$ respectively. All Type-II users' desired transmission rate targets are set to be $\log_2 11.$  See Fig. \ref{plot}
 for the corresponding OFDMA system.

Our construction is such that the channel condition of the Type-I (Type-II) users on the Type-I (Type-II) subcarriers is reasonably better than the one of the Type-I (Type-II) users on the Type-II (Type-I) subcarriers. {One} can check that the only possible way for all $K$ users to meet their transmission rate targets is that, each Type-II user transmits full power on ({any}) one Type-II subcarrier (actually Type-II users and subcarriers are dummy users and subcarriers) and all Type-I users appropriately transmit power on Type-I subcarriers. By Lemma \ref{thm-feasi}, however, checking whether all Type-I users' transmission rate requirements can be satisfied is strongly NP-hard.

\begin{figure}[t]
     \centering
     \centerline{\includegraphics[width=9.7cm]{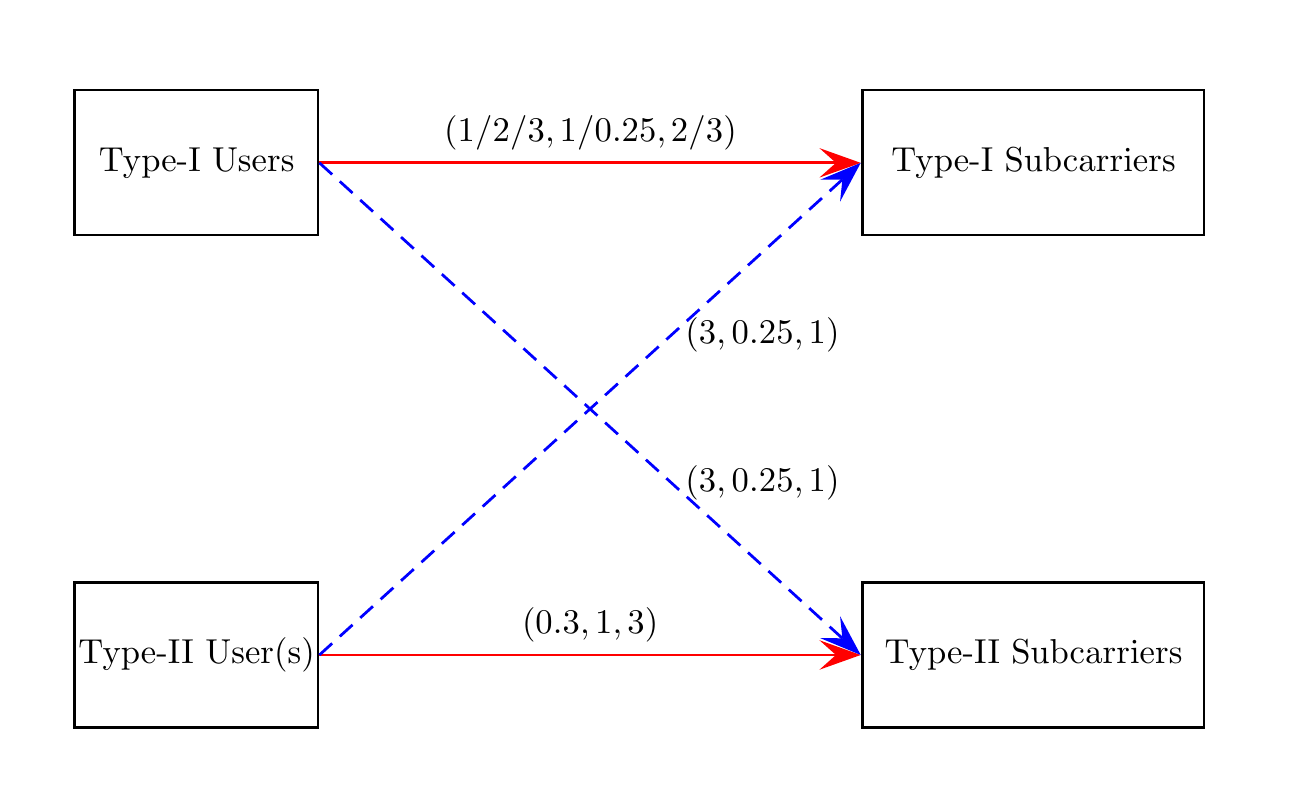}}%\vspace{-0.3cm}
     \caption{An illustration of the multi-user OFDMA system constructed for case (i) and case (ii), where the number of Type-I subcarriers is exactly twice of the number of Type-I users. In both cases, the channel parameters of Type-I users on Type-I subcarriers are chosen in the same way as in the proof of Lemma \ref{thm-feasi} where $c=2$. Noise powers, direct-link channel gains, power budgets of Type-II user(s) on Type-II subcarriers are set to $0.3,$ $1,$ and $3,$ respectively; these parameters of Type-I users on Type-II subcarriers and  Type-II user(s) on Type-I subcarriers are set to $3,0.25,$ and $1,$ respectively. For case (i), the transmission rate targets of Type-II users are all set to $\log_2 11.$  For case (ii), the transmission rate target of the single Type-II user is set to  $((c-2)K+2)\log_2 11.$}
     \label{plot}
     \end{figure}
    % \begin{figure}[!t]
%     \centering
%     \centerline{\includegraphics[width=12cm]{plot6.eps}}%\vspace{-0.3cm}
%     \caption{An illustration of the multi-user multi-carrier OFDMA system constructed for case (i) and (ii), where the number of Type-I subcarriers is exactly twice of the number of Type-I users. In both cases, the channel parameters of Type-I users on Type-I subcarriers are chosen in the same way as in the case that $c=2$. Noise powers, direct-link channel gains, power budgets of Type-II user(s) on Type-II subcarriers are set to $0.3,$ $1,$ and $3,$ respectively; these parameters of Type-I users on Type-II subcarriers and  Type-II user(s) on Type-I subcarriers are set to $3,0.25,$ and $1,$ respectively. For case (i), the transmission rate targets of Type-II users are all set to $\log 11.$  For case (ii), the transmission rate target of the single Type-II user is set to  $((c-2)K+2)\log 11.$}
%     \label{plot}
%     \end{figure}

%Type-II users and Type-II subcarriers are dummy ones, i.e.,

%We can see that the 3-dimensional problem has a yes answer if and only if that all Type-II users will transmit full power on one Type-II subcarrier and

In case (ii), we partition $K$ users into $K-1$ Type-I users and $1$ Type-II user, and $N=cK$ subcarriers into $2(K-1)$ Type-I subcarriers and $(c-2)K+2$ Type-II subcarriers. We construct the channel parameters of Type-I users on Type-I subcarriers in the same way as in the proof of Lemma \ref{thm-feasi} where $c=2.$ Moreover, noise powers, direct-link channel gains, and power budgets of the single Type-II user on Type-II subcarriers are set to $0.3,$ $1,$ and $3,$ respectively; these parameters of all Type-I users on Type-II subcarriers and the single Type-II user on Type-I subcarriers are set to $3,0.25,$ and $1,$ respectively. The transmission rate of the single Type-II user is required to be not less than $((c-2)K+2)\log_2 11.$  See Fig. \ref{plot} for the corresponding system.

Due to special construction of the system, {one} can check that the only way for all $K$ users to meet their transmission rate targets is that, the single Type-II user transmits full power on all Type-II subcarriers and all Type-I users appropriately transmit power on Type-I subcarriers. Again, by Lemma \ref{thm-feasi}, checking whether all Type-I users' transmission rate requirements can be satisfied is strongly NP-hard.

\section{Proof of Theorem \ref{thm-uti2}}\label{app-thm-uti2}
%\emph{:}
We first prove the strong NP-hardness of problem \eqref{utility} {with $H=H_1,H_2,H_3,~\text{and}~H_4$} for the special case $c=2$ and then prove their strong NP-hardness for the general case $c>1.$

{\textbf{We first consider the case $N/K=2$.}} For any instance of the $3$-dimensional matching problem, we construct the same system as in the proof of Lemma \ref{thm-feasi} and set $P_{k}=5~(k\in\K=\left\{1_x,2_x,...,K_x\right\}).$

{\textbf{Strong NP-hardness of problem \eqref{utility} with $H=H_4:$}} Lemma \ref{thm-feasi} directly implies that the following problem
   \begin{equation*}\label{min}
 \begin{array}{cl}
\displaystyle \max_{\{p_k^n\}} & \displaystyle \min_{k\in\K} \left\{R_{k}\right\} \\[8pt]
\text{s.t.} %\quad & R_{k}\geq \gamma_{k},~k\in\K \\[10pt]
    & \displaystyle \sum_{n\in\N}p_k^{n}\leq P_{k},~k\in\K, \\[12pt]
    & P_k^n\geq p_k^n\geq 0,~k\in\K,~n\in\N,\\[5pt]
    %& \bar p_k^n\geq p_k^n\geq 0,~k=1_x,2_x,...,K_x,~n=1_z,2_z,...,K_z\\[10pt]
    & p_k^np_j^n=0,~\forall~j\neq k,{~k,\,j\in\K,}~n\in\N%\label{cons}
    \end{array}
\end{equation*}
is strongly NP-hard, since the problem of checking whether its optimal value is greater than or equal to $3$ is strongly NP-hard. %This is because that the problem of checking whether the optimal objective value of problem \eqref{min} is greater than or equal to $3\log2$ is strongly NP-hard.

{\textbf{Strong NP-hardness of problem \eqref{utility} with $H=H_1:$}} {We prove that the sum-rate maximization problem
   \begin{equation}\label{sum}
 \begin{array}{rl}
\displaystyle \max_{\{p_k^n\}} & \displaystyle \frac{1}{K}\sum_{k\in\K} R_{k_x} \\[12pt]
\text{s.t.} %\quad & R_{k}\geq \gamma_{k},~k\in\K \\[10pt]
    & \displaystyle \sum_{n\in\N}p_k^{n}\leq P_{k},~k\in\K, \\[12pt]
    & P_k^n\geq p_k^n\geq 0,~k\in\K,~n\in\N,\\[5pt]
    %& \bar p_k^n\geq p_k^n\geq 0,~k=1_x,2_x,...,K_x,~n=1_z,2_z,...,K_z\\[10pt]
    & p_k^np_j^n=0,~\forall~j\neq k,{~k,\,j\in\K,}~n\in\N%\label{cons}
    \end{array}
\end{equation} is also strongly NP-hard. In particular, we show that checking the optimal value $H_1(R_{1_x}^*,R_{2_x}^*,...,R_{K_x}^*)$ of problem \eqref{sum} is greater than or equal to $3$ is strongly NP-hard, where $R_{k_x}^*$ is the transmission rate of user $k_x$ at the solution of problem \eqref{sum}.} To this aim, consider the following relaxation of problem \eqref{sum},
   \begin{equation}\label{relaxation}
 \begin{array}{cl}
\displaystyle \max_{\{p_k^n\}} & \displaystyle \frac{1}{K}\sum_{k\in\K} R_k' \\[12pt]
\text{s.t.} %\quad & R_{k}\geq \gamma_{k},~k\in\K \\[10pt]
    & \displaystyle \sum_{k\in\K}\sum_{n\in\N}p_k^{n}\leq 5K, \\[13pt]%\sum_{k\in\K}P_{k}=
    & p_k^n\geq 0,~k\in\K,{~k,\,j\in\K,}~n\in\N,\\[5pt]
    & p_k^np_j^n=0,~\forall~j\neq k,~n\in\N,%\label{cons}
    \end{array}
\end{equation} where $$R_k'=\sum_{n\in\Y}\log_2\left(1+\frac{p_k^n}{1}\right)+\sum_{n\in\Z}\log_2\left(1+\frac{p_k^n}{2}\right)\geq R_k.$$
Due to the OFDMA constraint, we see that problem \eqref{relaxation} is equivalent to

\begin{equation}\label{relaxation2}%_{n\in\N}
 \begin{array}{cl}
\displaystyle \max_{\left\{p^{n}\right\}} & \displaystyle \sum_{n\in\Y}\log_2\left(1+\frac{p^n}{1}\right)+\sum_{n\in\Z}\log_2\left(1+\frac{p^n}{2}\right) \\[13pt]
\text{s.t.} & \displaystyle \sum_{n\in\N}p^{n}\leq 5K,\\[12pt]
& p^n\geq 0,~n\in\N.
\end{array}
\end{equation} Noticing that problem \eqref{relaxation2} is convex, we can obtain its optimal solution $$(p^{1_y},p^{2_y},...,p^{K_y},p^{1_z},p^{2_z},...,p^{K_z})=(3,3,...,3,2,2,...,2)$$ and its optimal value $3K.$
Therefore, the optimal value $H_1(R_{1_x}^*,R_{2_x}^*,...,R_{K_x}^*)$ of the original problem \eqref{sum} is less than or equal to $3,$ and the equality holds if and only if $$R_{1_x}^*=R_{2_x}^*=\cdots=R_{K_x}^*=3.$$
The latter holds if and only if the answer to the $3$-dimensional matching problem is yes. Therefore, the optimal value of problem \eqref{sum} is greater than or equal to $3$ if and only if the answer to the $3$-dimensional matching problem is yes.

{\textbf{Strong NP-hardness of problem \eqref{utility} with $H=H_2$ and $H=H_3:$}} For the cases $H_2$ and $H_3$, notice that for all $R_1$, $R_2$, \ldots, $R_K\geq 0,$
\begin{align*}&H_4(R_1,R_2,...,R_K)\leq H_3(R_1,R_2,...,R_K)\\
\leq ~& H_2(R_1,R_2,...,R_K)\leq H_1(R_1,R_2,...,R_K)\end{align*}
 and the equalities hold if and only if $$R_1=R_2=\cdots=R_K.$$ Therefore, the optimal value of $H_2(R_{1_x}^*,R_{2_x}^*,...,R_{K_x}^*)$ or $H_3(R_{1_x}^*,R_{2_x}^*,...,R_{K_x}^*)$ is greater than or equal to $3$ if and only if the answer to the $3$-dimensional matching problem is yes. This implies the strong NP-hardness of all the four utility maximization problems {\eqref{utility} with $c=2$}.

{Now we consider the general case $N/K=c>1.$} We show the strong NP-hardness of problem \eqref{utility} for the general case $c>1$ by constructing some dummy users and subcarriers as in the proof of Theorem \ref{thm-feasi2}.%, we can extend Theorem \ref{thm-uti} to the case $c>1$ (Theorem \ref{thm-uti2}).
~Take the sum-rate maximization problem as an example. It is simple to check that
\begin{itemize}
  \item [-] in case $1<c<2,$ the sum-rate utility function of the constructed system is greater than or equal to $$3(c-1)+(2-c)\log_2 11$$ if and only if the given instance of the $3$-dimensional matching problem with size $(c-1)K$ has a positive answer;
      \item [-] in case $c>1,$ the sum-rate utility function of the constructed system is greater than or equal to $$\frac{3(K-1)+((c-2)K+2)\log_2 11}{K}$$ if and only if the given instance of the $3$-dimensional matching problem with size $K-1$ has a positive answer.
\end{itemize}Hence, the sum-rate maximization problem in the case that $c>1$ is strongly NP-hard. Similar results also hold true
for the other three utility functions. We omit the proof for brevity.

\section{Extended Water-Filling Solution for Power Control Problem \eqref{sub2}}\label{app-2}%{}
%
%\begin{equation}\label{sub2}
%\begin{array}{cl}
%\displaystyle \max_{\bp} & \displaystyle \sum_{n=1}^N\log\left(1+\frac{\alpha^np^{n}}{\eta^n}\right)\\[20pt]%\label{obj}
%\text{s.t.} & \sum_{n=1}^Np^n\leq \bar p, \\[10pt]
%    %& \sum_{n=1}^Np_k^{n}\leq \bar p_{k}, \\[10pt]
%    &  P^n\geq p^n\geq 0,~n\in\N,
%    \end{array}\end{equation}

{To show \eqref{pkn2} is the solution to problem \eqref{sub2}, let us first write down the KKT condition of problem \eqref{sub2}. Suppose $\left(\lambda,~\left\{\xi^n\right\},~\left\{\nu^n\right\}\right)$ are the Lagrangian multipliers corresponding to the constraints $\sum_{n\in\N}p^n\leq P,$ $P^n\geq p^n,$ and $p^n\geq 0,$ respectively, then the KKT condition of problem \eqref{sub2} is given as follows
  \begin{equation}\label{KKT}\left\{\!\!\!\!\!
    \begin{array}{cl}
      & \frac{-1}{\eta^n/\alpha^n+p^n}+\lambda+\xi^n-\nu^n=0,\\[10pt]
      & \displaystyle \sum_{n\in\N}p^n\leq P,~\lambda\geq 0,~\lambda\left(\sum_{n\in\N}p^n-P\right)=0,\\[15pt]
      &  P^n\geq p^n,~\xi^n\geq 0,~(P^n-p^n)\xi^n=0,~n\in\N,\\[6pt]
      &  p^n\geq 0,~\nu^n\geq0,~p^n\nu^n=0,~n\in\N.
    \end{array}\right.
  \end{equation}%where .
  Since problem \eqref{sub2} is convex for any fixed $P>0$, it follows that the KKT condition \eqref{KKT} is necessary and sufficient for $\left\{p^n\right\}$ to solve problem \eqref{sub2}. Therefore, to show \eqref{pkn2} is the solution to problem \eqref{sub2}, it suffices to show that there exist appropriate Lagrangian multipliers such that $\left(\left\{p^n\right\}, \lambda, \left\{\xi^n\right\}, \left\{\nu^n\right\}\right)$ satisfy the KKT system \eqref{KKT}. Next, we construct appropriate $(\lambda, \left\{\xi^n\right\}, \left\{\nu^n\right\})$ such that $\left\{p^n\right\}$ in \eqref{pkn2} together with the constructed $(\lambda, \left\{\xi^n\right\}, \left\{\nu^n\right\})$ satisfy all the conditions in \eqref{KKT}.}

Specifically, we choose $\lambda\geq 0$ such that the objective value of problem \eqref{sub2} is equal to $\gamma,$ and
  \begin{itemize}
    \item [-] if $p^n=0$ in \eqref{pkn2}, then $1/\lambda-\eta^n/\alpha^n\leq 0;$ we set $\xi^n=0$ and $\nu^n=\lambda-\alpha^n/\eta^n\geq 0;$%\\[1pt]
    \item [-] if $0<p^n<P^n$ in \eqref{pkn2}, then $P^n>1/\lambda-\eta^n/\alpha^n> 0;$ we set $\xi^n=\nu^n=0;$%\\[1pt]
    \item [-] if $p^n=P^n$ in \eqref{pkn2}, then $1/\lambda-\eta^n/\alpha^n\geq P^n;$ set $\xi^n=\frac{1}{\eta^n/\alpha^n+P^n}-\lambda\geq 0$ and $\nu^n=0.$%\\[5pt]
  \end{itemize}It is simple to check $\left(\left\{p^n\right\},\lambda,\left\{\xi^n\right\},\left\{\nu^n\right\}\right)$ constructed in the above satisfies the KKT system \eqref{KKT}.% except the constraint $\sum_{n=1}^Np_k^n\leq \bar p_k.$ Actually, this constraint holds automatically since the SINR constraint is active at the point $\bp_k$ in \eqref{pkn-general}.
~Hence, \eqref{pkn2} is the solution to problem \eqref{sub2}.%, which completes the proof of lemma \ref{yinli1}.

\section{The Order of Stirling Number $S(K+C,K)$}\label{app-s}%{The number of ways to partition a set of $K+c$ elements into $K$ non-empty subsets is in the order $O(K^{2c})$}
In combinatorics, a Stirling number of the second kind \cite{stirling}, denoted by $S(N,K),$ is the number of ways to partition a set of $N$ objects into $K$ nonempty subsets, where $N\geq K\geq1$. Stirling numbers of the second kind obey the following recursive relation
\begin{equation}\label{stirling}S(N+1,K)=S(N,K-1)+KS(N,K)\end{equation} with initial conditions $S(k,k)=1$ and $S(k,0)=0$ for any $k\geq 1.$ To understand this formula, observe that a partition of $N+1$ objects into $K$ nonempty subsets either contains the $(N+1)$-th object as a singleton (which corresponds to the term $S(N,K-1)$ in \eqref{stirling}) or contains it with some other elements (which corresponds to the term $KS(N,K)$ in \eqref{stirling}).

Next, we claim by induction that $$S(K+C,K)\leq(K+C)^{2C}.$$ In fact, when $C=1,$ we obtain $$S(K+1,K)=\frac{K(K+1)}{2}.$$ This is because that dividing $K+1$ elements into $K$ sets means dividing it into one set of size $2$ and $K-1$ sets of size $1.$ Therefore, we only need to pick those two elements. Assume that $S(K+C-1,K)\leq {\left(K+C-1\right)}^{2(C-1)},$ we show that $S(K+C,K)\leq (K+C)^{2C}.$ By invoking \eqref{stirling}, we have\begin{equation*}
\begin{array}{rl}
&S(K+C,K)\\[3pt]
=&S(K+C-1,K-1)+K{S(K+C-1,K)}\\[3pt]
\leq&S(K+C-1,K-1)+K{\left(K+C-1\right)}^{2(C-1)}\\[3pt]
\leq&S(K+C-1,K-1)+{\left(K+C\right)}^{2C-1}\\[3pt]
\leq&\sum_{k=1}^K \left(k+C\right)^{2C-1}\\[3pt]
\leq&(K+C)^{2C},
\end{array}\end{equation*}which shows that $S(K+C,K)\leq (K+C)^{2C}$ holds true.

\end{document}